\documentclass[10pt, final, journal, twocolumn]{IEEEtran}
\usepackage{algorithm} 
\usepackage{algorithmic} 
\usepackage{multirow} 
\usepackage{amsmath}
\usepackage{xcolor}
\usepackage{stfloats}
\usepackage{amssymb}
\usepackage{amsmath}
\usepackage{mathrsfs}
\usepackage{cite}
\usepackage{url}
\usepackage{xcolor}
\usepackage{cite,graphicx,amsmath,amssymb}
\usepackage{subfigure}
\usepackage{citesort}
\usepackage{fancyhdr}
\usepackage{mdwmath}
\usepackage{mdwtab}
\usepackage{caption}
\usepackage{amsthm}
\usepackage{enumitem}
\usepackage{setspace}             
\usepackage{url}   
\usepackage{boondox-cal}
\usepackage{bm}
\usepackage{booktabs}
\usepackage{pifont}
\newcommand{\xmark}{\ding{55}}  
\DeclareMathAlphabet{\mathcal}{OMS}{cmsy}{m}{n}
\newtheorem{remark}{Remark}

\newtheorem{theorem}{Theorem}

\newtheorem{lemma}{Lemma}

\newtheorem{corollary}{Corollary}

\makeatletter
\def\ScaleIfNeeded{%
\ifdim\Gin@nat@width>\linewidth \linewidth \else \Gin@nat@width
\fi } \makeatother

\makeatletter
\renewcommand{\maketag@@@}[1]{\hbox{\m@th\normalsize\normalfont#1}}%
\makeatother

\begin{document}
\title{Joint Sensing and Covert Communications in RIS-NOMA Systems}
\author{
	Jiayi~Lei,~\IEEEmembership{Graduate Student Member,~IEEE,}
	Xidong~Mu,~\IEEEmembership{Member,~IEEE,}
	Tiankui~Zhang,~\IEEEmembership{Senior Member,~IEEE,}
	Wenjun~Xu,~\IEEEmembership{Senior Member,~IEEE,}
	Ping~Zhang~\IEEEmembership{Fellow,~IEEE}
	\thanks{ 
	This work was supported by the National Natural Science Foundation of
	China under Grant 62371068. \textit{(Corresponding authors: Tiankui Zhang; Wenjun Xu}.)
	
	Jiayi~Lei and Tiankui~Zhang are with the School of Information and Communication Engineering, Beijing University of Posts and Telecommunications, Beijing 100876, China (e-mail: \{leijiayi, zhangtiankui\}@bupt.edu.cn).
	
	Xidong Mu is with the Centre for Wireless Innovation (CWI), School of Electronics, Electrical Engineering and Computer Science, Queen's University Belfast, Belfast, BT3 9DT, U.K. (x.mu@qub.ac.uk).
	
	Wenjun Xu and Ping Zhang are with the State Key Laboratory
	of Networking and Switching Technology, Beijing University of Posts and
	Telecommunications, Beijing 100876, China (e-mail: \{wjxu,pzhang\}@bupt.edu.cn).
	}
}

\maketitle

\begin{abstract}
A reconfigurable intelligent surface (RIS)-assisted non-orthogonal multiple access (NOMA) system is investigated, where the transmitter (Alice) is a dual-functional radar-communication (DFRC) base station (BS) that aims to sense the location of a potential warden (Willie), while simultaneously transmitting public and covert signals to the legitimate users, Carol and Bob, respectively.
Both cases of known and unknown Willie locations are considered.
For the known-location case, assuming perfect channel state information (CSI) at Willie, a covert rate maximization is formulated with the joint optimization of active and passive beamforming, which is solved using successive convex approximation~(SCA), penalty method, and semidefinite relaxation~(SDR). 
For the unknown-location case, we propose to estimate Willie’s location via radar sensing and develop a sensing-based imperfect CSI model. In particular, the CSI error uncertainty is bounded by the sensing accuracy, which is characterized by the Cramér-Rao bound (CRB). Subsequently, a robust communication rate maximization problem is formulated under the constraints on quality-of-service~(QoS) of Carol, sensing accuracy, and covertness level. 
The Schur complement and S-procedure are employed to handle the non-convex constraints.	
Numerical results compare the system performance under the two cases, and demonstrate the significant covert performance superiority of the sensing-based imperfect CSI model and NOMA over the general norm-bounded imperfect CSI model and the orthogonal multiple access scheme.
Furthermore, the dual yet contradictory effects of sensing on covert communications are revealed. It is also found that Alice primarily utilizes Carol’s signal for sensing, while allocating almost all of Bob’s signal for communication.	
\end{abstract}
\begin{IEEEkeywords}
	Covert communication,~Cramér-Rao bound,~imperfect CSI,~non-orthogonal multiple access,~reconfigurable intelligent surface.
\end{IEEEkeywords}

\section{Introduction}
With the coming of the Internet of Everything~(IoE) era, an increasing volume of sensitive, private, and confidential data is transmitted over open wireless channels, motivating extensive research on secure communications~\cite{7467419,9108996}. 
Cryptography and physical layer security~(PLS) techniques achieve information protection by exploiting message or protocol encryption and the inherent characteristics of wireless channels, respectively~\cite{9382022}. However, both approaches focus on preventing eavesdroppers from decoding the transmitted signals by concealing the content, and thus remain vulnerable to sophisticated adversaries equipped with advanced decryption algorithms or signal processing capabilities.
In contrast, covert communication aims to hide the very existence of the transmission itself from warden's detection, thereby providing a higher security level, without being cracked by technological progress at detectors~\cite{10090449}.
In general, the success of hiding the transmission behavior relies on the effective use of uncertainty or randomness to confuse and mislead warden, which can be introduced from channel fading, environment noise, power variation, location dynamic, etc~\cite{10278199,9849051,10341301}.
In~\cite{10278199}, the authors investigated the covert communication based on noise uncertainty, and proved that the maximum reliable covert rate against warden with optimal detector is $\mathcal{O}(\sqrt{N})$~bits per $N$ channel uses.
In~\cite{9849051}, user's covert rate was maximized via optimized transmit and jamming power.
In~\cite{10341301}, two unmanned aerial vehicles (UAVs) were
 deployed to introduce additional channel uncertainty through their high mobility, one acting as a relay assistant for Alice, and the other as an interference assistant targeting Willie.

Another effective approach to facilitate covert communications is to embed the covert signal within a public signal, which serves as the cover~\cite{8673809}. Non-orthogonal multiple access (NOMA) naturally realizes this mechanism by allowing multiple user signals to be superimposed and transmitted over the same time and frequency resources. This feature not only facilitates the concealment of covert signals but also enhances spectral efficiency and user fairness~\cite{9146329}.
It has been extensively verified that NOMA achieves higher performance gains when users exhibit markedly different channel conditions. 
Based on this, reconfigurable intelligent surfaces (RISs), which can reconfigure the electromagnetic environment through phase and amplitude controls, are considered to provide additional support for covert communications in NOMA systems. According to their operation on incident signals, RISs are generally classified into three types: reflective RISs, transmissive RISs, and simultaneously transmitting and reflecting RISs (STAR-RISs)~\cite{10550177}, suitable for scenarios where the transmitter and receiver are on the same side, on opposite sides, or on either side of RISs, respectively.
The contribution of RISs to covert communication can be summarized in two main aspects. On the one hand, by intelligently adjusting the phase shifts of their elements, RISs can regulate the channel gains of the public and covert users according to specific communication requirements, thereby further enhancing the performance of NOMA~\cite{9424177}. On the other hand, the phase shifts themselves introduce an inherent source of uncertainty, which helps confuse the warden and reduce its detection accuracy~\cite{9524501}.
Currently, the covert communication in RIS-NOMA systems have attracted increasing attention~\cite{9524501,10086651,10261424,10380743,10680088}.
The authors of \cite{9524501} investigated covert communications in a RIS assisted NOMA system, considering both uplink and downlink transmissions, and numerically demonstrated the necessity of  RIS and NOMA for achieving a positive covert rate.
In~\cite{10086651}, NOMA and rate splitting~(RS) were jointly applied to RIS aided users, and closed-form expressions for the detection error probability (DEP) and covert rate were derived.
The active RIS was employed in~\cite{10261424} to overcome the double path-loss attenuation in traditional RIS assisted NOMA covert communications. 
The authors of~\cite{10380743} and~\cite{10680088} further investigated STAR-RIS assisted covert communications in NOMA Systems. 
The defective successive interference cancellation~(SIC) was taken into account in~\cite{10380743}, where an effective covert rate  maximization problem was formulated to balance the reliability and covertness. 
In \cite{10680088}, the authors proposed an innovative STAR-RIS-assisted covert communication scheme in UAV-enabled NOMA systems, with convincing simulation results highlighting its effectiveness.

One crucial issue in covert communications is the acquisition of the warden's location and the corresponding channel state information~(CSI). Unlike legitimate users, the warden typically conceals its location, and due to its uncooperative nature, obtaining accurate CSI is extremely difficult. This poses a significant challenge for effective covert transmissions.
Most of the existing works neglect this practical issue, instead ideally assuming that the warden’s location is known and that perfect CSI is available.
Specially, the covert communications with an unknown-location warden were addressed in~\cite{9099441} and~\cite{9390411}.
The authors of \cite{9099441} introduced the concept of a covert threat region, defined as the set of potential warden locations at which the covert throughput fell below a specified threshold.
In~\cite{9390411}, a two-stage scheme was resourcefully proposed. In the first stage, the optimal warden location was determined based on covert outage probability, while in the second, transmit and jamming powers were optimized to maximize the covert transmission rate in the worst-case scenario. This approach effectively guarantees a lower bound for the system's covert communication capability.
Furthermore, the scenarios of imperfect CSI at the warden were addressed in \cite{10546987,9438645,10039715}. 
In~\cite{10546987}, the bounded CSI error model was adopted for the warden’s channel, and a robust average covert transmission rate maximization problem was formulated.
In \cite{9438645}, a RIS-assisted covert communication system was studied, where covert rate maximization was performed under three CSI scenarios for the warden’s link: perfect CSI, statistical imperfect CSI, and norm-bounded imperfect CSI.
In addition to the warden’s channel uncertainty, the imperfect CSI at the legitimate user was also  considered in~\cite{10039715}, and a robust transmission scheme was proposed based on continuous active beamforming and discrete passive beamforming.

However, the aforementioned works~\cite{9099441,9390411} do not explore methods for estimating the warden’s location. In addition, the studies in~\cite{10546987,9438645,10039715} adopt conventional imperfect channel models, which lack a clear theoretical basis for characterizing the uncertainty bound of CSI errors.
Inspired by the above observation, we propose to leverage the radar sensing capability of a dual-functional radar–communication (DFRC) base station (BS) to estimate the warden’s location~\cite{8869705}, and to utilize the sensing accuracy to characterize the uncertainty of imperfect CSI.
 In fact, there has emerged many research on covert communications in integrated sensing and communication~(ISAC) systems, but many of them involve certain other users as sensing targets, rather than the warden~\cite{10292914,10713261,10400197}.
In \cite{10974475,10659011,10497104}, the warden was explicitly regarded as a sensing target and the  imperfect CSI was accounted for the robust covert communications. Nevertheless, it is worth noting that these works still employ conventional CSI error models, whose uncertainty is agnostic to sensing mechanism.
Specifically, \cite{10974475} studied the joint secure and covert communications in STAR-RIS assisted ISAC systems, where both the warden and eavesdropper were set as sensing targets.
The case with uncertain target locations was considered, where the imperfect CSI followed the statistical CSI error model.
In \cite{10659011}, a additional dual-functional artificial noise (DFAN) was designed to induce uncertainty for covert communications and simultaneously sense multiple targets, including the warden. Three CSI error models for the warden’s channel were investigated, covering bounded-error, Gaussian-error and statistical-error models.
In~\cite{10497104}, the case of multiple wardens was examined, where the imperfect radar channels were considered as norm-bounded and probabilistic CSI error models.
Different from the works in \cite{10974475,10659011,10497104}, the study in \cite{10592837} adopted a sensing assisted imperfect CSI model that integrated the sensing capability into CSI uncertainty. However, it did not consider the crucial enhancements offered by RIS and NOMA for robust covert communications, and the relationship between sensing accuracy and robust covertness performance was not fully explored.

Motivated by the above, this paper investigates joint sensing and covert communications in  RIS-NOMA systems, where sensing is specifically employed to estimate the warden’s location, and an imperfect CSI model bounded by sensing accuracy is established. The main contributions are summarized as follows:
\begin{itemize}
	\item We present a joint sensing and covert communication framework that includes a DFRC-BS, a potential warden (Willie), and two legitimate indoor users: a public user (Carol), and a covert user (Bob). NOMA is employed to provide cover for the covert signal and to improve communication efficiency. Moreover, a transmissive RIS is deployed to establish reliable outdoor-to-indoor links and to further unleash the NOMA gain by intelligently reconfiguring the radio environment.
	\item We first consider an ideal scenario where Willie’s location is known, and only the communication functionality of the DFRC-BS is utilized. Under the worst-case condition with the minimum DEP~(MDEP), a covert rate maximization problem is formulated, subject to the quality-of-service (QoS) constraint of Carol. To solve this non-convex problem, the active and passive beamforming are jointly optimized by employing the successive convex approximation (SCA) technique, the penalty method, and semidefinite relaxation (SDR).
	\item We further investigate a more practical scenario in which Willie’s location is unknown. The radar sensing capability of the DFRC-BS is exploited to estimate Willie’s angle, based on which a sensing-based imperfect CSI model is proposed, with the CSI uncertainty bounded by the Cramér–Rao bound (CRB). A robust covert rate maximization problem is then formulated, subject to constraints on QoS and sensing accuracy. Building upon the algorithm developed for the known-location case, the problem is tackled with the aid of the Schur complement and the S-procedure.
	\item A series of numerical simulations are conducted, from which three key insights are obtained:
	1)~The perfect CSI model in the known-location scenario yields the highest covert communication rate and serves as an upper bound. The sensing-based imperfect CSI model significantly outperforms the general norm-bounded imperfect CSI model. Moreover, NOMA outperforms orthogonal multiple access (OMA).
	2)~Sensing plays a dual role in robust covert communication: an excessively high sensing accuracy requirement induces a trade-off between sensing and communication performance; otherwise, sensing positively contributes to enhancing robust covert transmission.
	3)~Sensing of Willie is primarily achieved through Carol’s signal, whereas Bob’s signal is mainly directed toward the RIS.
\end{itemize}

The comparison between this paper and the existing works is summarized as~\textbf{Table~\ref{tab:related_works}}.  The rest of this paper is organized as follows. 
Section~\ref{model} presents the framework for joint sensing and covert communications in RIS-assisted NOMA systems. Sections~\ref{perfect} and~\ref{imperfect} investigate the covert rate maximization problem under scenarios with known and unknown Willie, respectively. Section~\ref{simulation} shows the numerical results,and Section~\ref{conclusion} concludes the paper.

\textsl{Notations:} Scalars, vectors, and matrices are represented by lowercase letters, bold lowercase letters, and bold uppercase letters, respectively.~$(\cdot)^\mathrm{H}$ and $(\cdot)^\mathrm{T}$ denote the conjugate transpose and transpose of a matrix or vector, respectively. $\mathbb{E}(x)$ represents the expectation of scalar $x$. $||\mathbf{x}||$ denotes the Euclidean norm of the vector $\mathbf{x}$, and $||\mathbf{X}||_F$, $||\mathbf{X}||_*$, $||\mathbf{X}||_2$ represent the Frobenius norm, nuclear norm, and spectral norm, respectively. $\mathrm{Tr}(\mathbf{X})$ and $\mathrm{Rank}(\mathbf{X})$ denote the trace and rank of the matrix $\mathbf{X}$. $\text{diag}(\mathbf{x})$ creates a diagonal matrix with the vector $\mathbf{x}$ as its diagonal, and $\text{Diag}(\mathbf{X})$ extracts the diagonal elements of the matrix $\mathbf{X}$ to form a diagonal matrix.

\begin{table*}[htbp]
	\caption{Comparison with the existing works}
	\label{tab:related_works}
	\centering
	\renewcommand{\arraystretch}{0.9}
	\begin{tabular}{cccccc}
		\toprule
		\small
		& Uncertainty solution & Cover  &Sensing target & CSI model  \\
		\midrule
		\cite{9524501,10086651,10261424,10380743,10680088}  & RIS phase shifts & Non-orthogonal public signal & \xmark & Perfect \\
		\cite{9099441}        & Power variation  & \xmark      &\xmark & Statistical\\
		\cite{10546987}      & Location dynamic         &\xmark   & \xmark    & Bounded \\
		\cite{9438645}       & RIS phase shifts  & \xmark   & \xmark   & Perfect\&Statistical\&Bounded  \\
		\cite{10713261} &RIS phase shifts &Non-orthogonal public signal &Others &Statistical \\
		\cite{10659011}     & Artificial noise &\xmark & Warden  &Statistical\&Bounded\&Gaussian\\
		\cite{10592837}  & Channel fading   & \xmark  & Warden & Sensing-based imperfect\\
		This work    & RIS phase shifts  & Non-orthogonal public signal & Warden &Perfect\&Sensing-based imperfect \\
		\bottomrule
	\end{tabular}
\end{table*}

\section{System Model} \label{model}

As shown in Fig.~\ref{Fig1}, the considered communication system involves two indoor single-antenna users, namely Bob and Carol, where Bob is a covert user and Carol a public user.
A transmissive RIS, consisting of a uniform planar array~(UPA) with  $M=M_1\times M_2$ elements, is deployed on the window to facilitate indoor-outdoor public/covert communications. Alice is a DFRC-BS equipped with $N_t$ transmitted antennas and $N_r$ received antennas. 
Willie is a warden with single antenna, aiming to detect whether Alice is transmitting signals to Bob. 

\begin{figure}[tbp]
	\centering
	\includegraphics[width=2.8in]{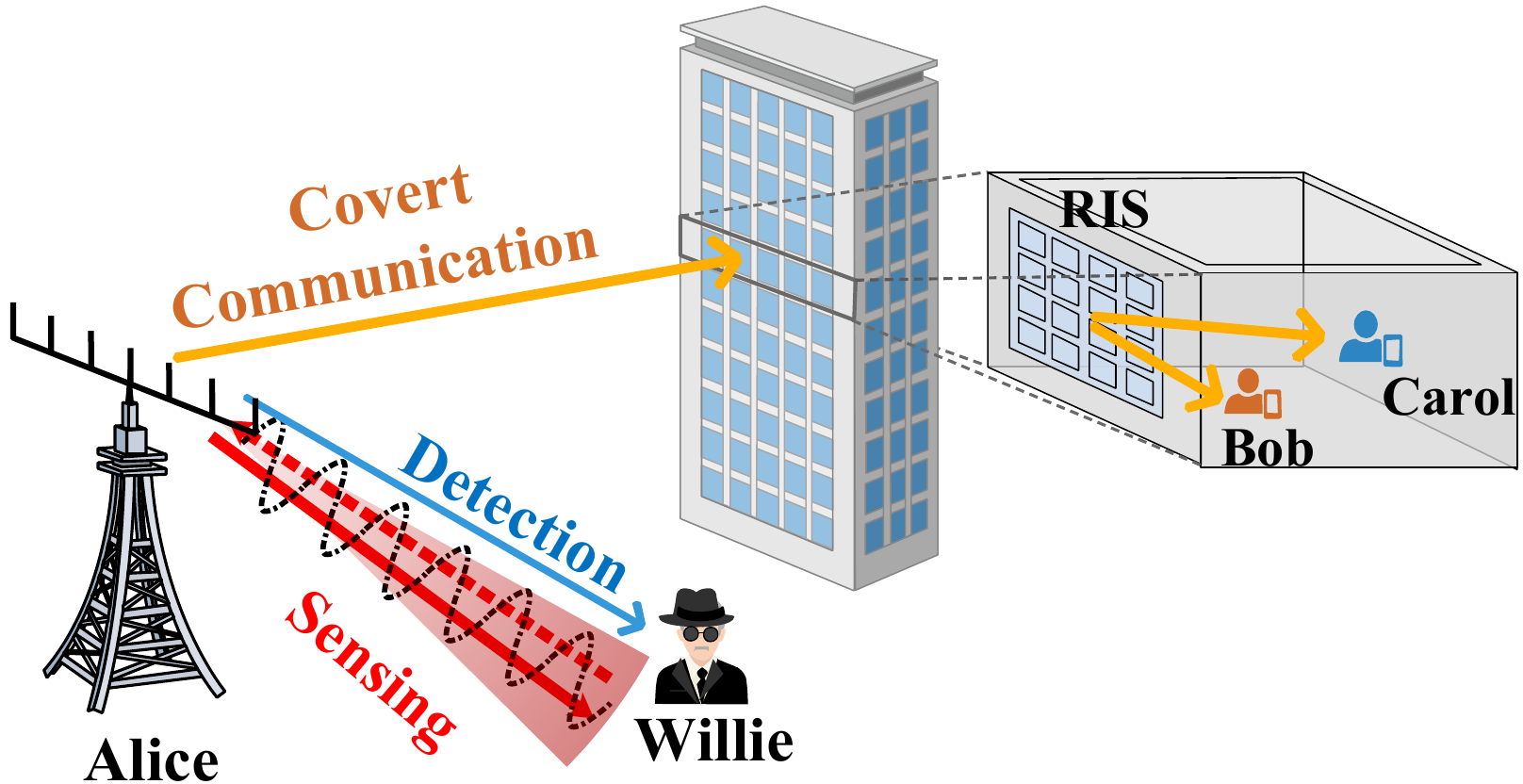}\\
	\caption{Illustration of joint sensing and covert communication in the RIS-NOMA system.\protect\footnotemark}\label{Fig1}
\end{figure}
\footnotetext{This work primarily focuses on the basic scenario with a single covert user and a single warden, where Willie is assumed to be static. Nevertheless, the proposed system framework can be extended to more complex scenarios, including multiple covert users or multiple wardens, or a dynamic Willie, by appropriately adapting the relevant problem formulation.}

\subsection{Channel Model}
The channels from Alice to RIS, from Alice to Willie and from RIS to Bob (Carol) are denoted by $\mathbf{H}_{\mathrm{AR}} \in \mathbb{C}^{M \times N_t}$, $\mathbf{h}_{\mathrm{AW}} \in \mathbb{C}^{N_t\times 1}$, and $\mathbf{h}_{\mathrm{RB}} (\mathbf{h}_{\mathrm{RC}}) \in \mathbb{C}^{M \times 1}$, respectively. The Alice-RIS link is assumed to experience the Rician fading, which is expressed as 
\begin{equation}\label{1H_AR}
	\mathbf{H}_{\mathrm{AR}} = \sqrt{\frac{\beta_\mathrm{R}}{d_{\mathrm{AR}}^{\alpha_\mathrm{R}}}} \left(\sqrt{\frac{\ell}{\ell+1}} \mathbf{H}_{\mathrm{AR}}^{\mathrm{LoS}} + \sqrt{\frac{1}{\ell+1}} \mathbf{H}_{\mathrm{AR}}^{\mathrm{NLoS}}\right),
\end{equation}
where $\beta_\mathrm{R}$ is the path loss at the reference distance, $d_{\mathrm{AR}}$ is the distance between Alice and RIS, $\alpha_\mathrm{R}$ is the path loss exponent, and $\ell$ is the Rician factor. The terms $\mathbf{H}_{\mathrm{AR}}^{\mathrm{LoS}}$ and $\mathbf{H}_{\mathrm{AR}}^{\mathrm{NLoS}}$ represent the line-of-sight (LoS) component and non-LoS (NLoS) component respectively. Specifically, the LoS component is given as 
\begin{equation}\label{H_AR}
	\mathbf{H}_{\mathrm{AR}}^{\mathrm{LoS}} = \mathbf{a}_\mathrm{R}(\gamma_\mathrm{A}, \phi_\mathrm{A})\mathbf{a}_\mathrm{A}(\theta_\mathrm{R})^\mathrm{H},
\end{equation}
where $\mathbf{a}_\mathrm{R}(\gamma_\mathrm{A}, \phi_\mathrm{A})$ is the receive steering vector at the RIS, with~$\gamma_\mathrm{A}$ and~$\phi_\mathrm{A}$ representing the elevation and azimuth angles of the incident signal from Alice, respectively.~$\mathbf{a}_\mathrm{A}$ is the transmit steering vector at Alice, with~$\theta_\mathrm{R}$ denoting the angle of departure~(AoD) from Alice to RIS. The NLoS component~$\mathbf{H}_{\mathrm{AR}}^{\mathrm{NLoS}}$ is assumed to follow a circularly
symmetric complex Gaussian distribution with zero mean and unit variance.

Given the proximity of the transmissive RIS to the indoor users, the channels from the RIS to Bob and Carol are assumed to be primarily LoS, and can thus be modeled as
\begin{equation}\label{h_RU}
	\mathbf{h}_{\mathrm{RB(RC)}} = \sqrt{\frac{\beta_{\mathrm{B(C)}}}{d_{\mathrm{RB(RC)}}^{\alpha_{\mathrm{B(C)}}}}} \, \mathbf{a}_\mathrm{R}(\gamma_{\mathrm{B(C)}}, \phi_{\mathrm{B(C)}}),
\end{equation}
where $\beta_{\mathrm{B(C)}}$,~$d_{\mathrm{RB(RC)}}$,~$\alpha_{\mathrm{B(C)}}$   correspond to the path loss at the reference distance, the distance between the RIS and Bob (or Carol), and the respective path loss exponent. The term~$\mathbf{a}_\mathrm{R}(\gamma_{\mathrm{B(C)}}, \phi_{\mathrm{B(C)}})$ refers to the RIS steering vector, with~$\gamma_{\mathrm{B(C)}}$ and~$\phi_{\mathrm{B(C)}}$ being the elevation and azimuth angles of departure from the RIS towards Bob (or Carol), respectively.

We consider a challenging scenario where Willie can freely select a location that optimizes detection by positioning itself near Alice with a direct LoS path. As such, the channel from Alice to Willie is also modeled as a LoS-dominated channel, expressed as 
\begin{equation}\label{h_AW}
	\mathbf{h}_{\mathrm{AW}} = \sqrt{\frac{\beta_\mathrm{W}}{d_{\mathrm{AW}}^{\alpha_\mathrm{W}}}}\mathbf{a}_\mathrm{A}(\theta_\mathrm{W}),
\end{equation} 
where $\beta_\mathrm{W}$ is the path loss at the reference distance,~$d_{\mathrm{AW}}$ denotes the distance between Alice and Willie, and~$\alpha_\mathrm{W}$ is the path loss exponent. The term~$\mathbf{a}_\mathrm{A}(\theta_\mathrm{W})$ represents the steering vector at Alice, with~$\theta_\mathrm{W}$ being the AoD from Alice to Willie.

The detailed expression for the steering vector of the UPA-based RIS, which appears in \eqref{H_AR} and \eqref{h_RU}, is given by
\begin{equation}
	\mathbf{a}_{\mathrm{R}}(\gamma,\phi) = \mathbf{a}_{\mathrm{h}}(\gamma,\phi) \otimes \mathbf{a}_{\mathrm{v}}(\gamma,\phi),
\end{equation}
where
\begin{equation*}
	\mathbf{a}_{\mathrm{h}}(\gamma,\phi) = \left[ 1,\; e^{j\frac{2\pi}{\lambda}d\sin\gamma\cos\phi},\; \dots,\; e^{j\frac{2\pi}{\lambda}d(M_1-1)\sin\gamma\cos\phi} \right]^\mathrm{T},
\end{equation*}
\begin{equation*}
	\mathbf{a}_{\mathrm{v}}(\gamma,\phi) = \left[ 1,\; e^{j\frac{2\pi}{\lambda}d\sin\gamma\sin\phi},\; \dots,\; e^{j\frac{2\pi}{\lambda}d(M_2-1)\sin\gamma\sin\phi} \right]^\mathrm{T},
\end{equation*}
and $\otimes$ denotes the Kronecker product. The steering vector of Alice, involved in \eqref{H_AR} and \eqref{h_AW}, is given by
\begin{equation*}
	\mathbf{a}_{\mathrm{A}}(\theta) = \left[ 1,\; e^{j\frac{2\pi}{\lambda}d\sin\theta},\; e^{j\frac{2\pi}{\lambda}2d\sin\theta},\; \dots,\; e^{j\frac{2\pi}{\lambda}(N_t-1)d\sin\theta} \right]^\mathrm{T},
\end{equation*}
where $d$ is the antenna or element spacing at Alice and RIS.

It is worth noting that, since Bob and Carol are legitimate users and the RIS is a manually deployed node assisting the communication, Alice is assumed to have accurate knowledge of their locations, which allows the exact CSIs in~\eqref{1H_AR} and~\eqref{h_RU} to be obtained. In contrast, as Willie is typically non-cooperative, its accurate location is difficult to acquire, which makes the perfect CSI in~\eqref{h_AW} challenging to obtain.
Let $\mathbf{\Lambda} = \operatorname{diag}(e^{j\theta_1}, e^{j\theta_2}, \dots, e^{j\theta_M})$ denote the passive beamforming matrix of RIS, where $\theta_m \in [0, 2\pi)$ represents the adjustable phase shift of the $m$-th transmitting element. Based on the aforementioned channel models, the equivalent channels from Alice to Bob and Carol via the RIS can be expressed as
$\mathbf{g}_\mathrm{B} = \mathbf{h}_{\mathrm{RB}}^\mathrm{H}\mathbf{\Lambda}\mathbf{H}_{\mathrm{AR}}$ and $\mathbf{g}_\mathrm{C} = \mathbf{h}_{\mathrm{RC}}^\mathrm{H}\mathbf{\Lambda}\mathbf{H}_{\mathrm{AR}}$, respectively.

\subsection{NOMA Transmission}
To provide a non-orthogonal cover signal against detection by Willie, and also to improve the spectral efficiency of the multi-user communication system, NOMA is employed to enable simultaneous public and covert transmissions.
Let $s_\mathrm{B}[l]$ and $s_\mathrm{C}[l]$ denote the covert and public symbols in the $l$-th channel use for Bob and Carol respectively, with $\mathbb{E}(|s_\mathrm{B}[l]|^2) = \mathbb{E}(|s_\mathrm{C}[l]|^2)= 1$. The superimposed signal transmitted at Alice is expressed as
\begin{equation}\label{signal}
	\mathbf{x}[l] = \mathbf{w}_\mathrm{B}s_\mathrm{B}[l]+\mathbf{w}_\mathrm{C}s_\mathrm{C}[l],
\end{equation} 
where $\mathbf{w}_\mathrm{B} \in \mathbb{C}^{N_t \times 1}$ and $\mathbf{w}_\mathrm{C} \in \mathbb{C}^{N_t \times 1}$ denote the precoding vector for Bob and Carol, respectively. Then, the signals received at Bob and Carol are given as 
\begin{equation}
	y_\mathrm{B}[l] = \mathbf{g}_\mathrm{B}\left(\mathbf{w}_\mathrm{B}s_\mathrm{B}[l]+\mathbf{w}_\mathrm{C}s_\mathrm{C}[l]\right)+n_\mathrm{B}[l],
\end{equation}
and
\begin{equation}
	y_\mathrm{C}[l] = \mathbf{g}_\mathrm{C}\left(\mathbf{w}_\mathrm{B}s_\mathrm{B}[l]+\mathbf{w}_\mathrm{C}s_\mathrm{C}[l]\right)+n_\mathrm{C}[l],
\end{equation}
where $n_\mathrm{B}[l]$ and $n_\mathrm{C}[l]$ denote the Additive White Gaussian Noise (AWGN) with the variance of $\sigma_\mathrm{B}^2$ and $\sigma_\mathrm{C}^2$. 

Without loss of generality, it is assumed that Carol is the weak user, i.e., the channel gain at Bob is higher than that at Carol, $\|\mathbf{g}_\mathrm{B}\| > \|\mathbf{g}_\mathrm{C} \|$. Therefore, a \textit{``Carol-to-Bob"} SIC order is set, meaning that Carol's signal is decoded first, followed by the subsequent extraction of Bob's desired signal, while Carol directly receives its signal by treating Bob’s signal as interference.
Such set does not affect the energy-ratio based detection at Willie, and help to achieve an effective cover for Bob, as more power is allocated to Carol than Bob.

Based on the \textit{``Carol-to-Bob"} SIC order, the signal-to-interference-plus-noise ratio (SINR) at Bob is given by:
\begin{equation}\label{sinrB}
\text{SINR}_\mathrm{B} = \frac{| \mathbf{g}_\mathrm{B} \mathbf{w}_\mathrm{B}|^2}{\sigma_\mathrm{B}^2}.	
\end{equation}
The SINR for decoding Carol's signal at Bob and Carol are expressed as
$\text{SINR}_{\mathrm{B} \to \mathrm{C}} = \frac{| \mathbf{g}_\mathrm{B} \mathbf{w}_\mathrm{C}|^2}{| \mathbf{g}_\mathrm{B} \mathbf{w}_\mathrm{B}|^2 + \sigma_\mathrm{B}^2}$ and
$\text{SINR}_{\mathrm{C} \to \mathrm{C}} = \frac{| \mathbf{g}_\mathrm{C} \mathbf{w}_\mathrm{C}|^2}{| \mathbf{g}_\mathrm{C} \mathbf{w}_\mathrm{B}|^2 + \sigma_\mathrm{C}^2},
$ respectively. To make the implementation of SIC feasible, the achievable SINR for Carol is given as~\cite{7555306}
\begin{equation}
	\text{SINR}_\mathrm{C} =  \min\Bigg\{ \frac{| \mathbf{g}_\mathrm{B} \mathbf{w}_\mathrm{C}|^2}{| \mathbf{g}_\mathrm{B} \mathbf{w}_\mathrm{B}|^2 + \sigma_\mathrm{B}^2},  \frac{| \mathbf{g}_\mathrm{C} \mathbf{w}_\mathrm{C}|^2}{| \mathbf{g}_\mathrm{C} \mathbf{w}_\mathrm{B}|^2 + \sigma_\mathrm{C}^2}\Bigg\}.
\end{equation}

\subsection{Detection at Willie}\label{C}
To detect the presence of covert communication, Willie is faced with a binary hypothesis test between: (i) the null hypothesis $\mathcal{H}_0$, where Alice transmits exclusively to Carol; and (ii) the alternative hypothesis $\mathcal{H}_1$, where Alice concurrently transmits to both Bob and Carol. Under these two hypotheses, the received signal in the $l$-th channel use at Willie is given as 
\begin{equation}\label{signal}
	y_W[l] = 
	\begin{cases}
		\mathbf{h}_\mathrm{AW}^\mathrm{H}\mathbf{w}_\mathrm{C}s_\mathrm{C}[l]+n_\mathrm{W}[l], & \mathcal{H}_0,  \\
		\mathbf{h}_\mathrm{AW}^\mathrm{H}\big(\mathbf{w}_\mathrm{C}s_\mathrm{C}[l]+\mathbf{w}_\mathrm{B}s_\mathrm{B}[l]\big)+n_\mathrm{W}[l], & \mathcal{H}_1,  \\
	\end{cases}
\end{equation}
where $n_\mathrm{W}[l]\sim{\cal CN} (0,\sigma_\mathrm{W}^2)$ denotes the AWGN at Willie.
Willie then needs to make a binary decision, denoted by $\mathcal{D}_0$ or $\mathcal{D}_1$, in favor of the hypothesis $\mathcal{H}_0$ or $\mathcal{H}_1$, respectively. In general, the detection performance of Willie is measured by the DEP, which is defined as 
\begin{equation}
	\label{DEP}
		\xi =\mathbb{P}_\mathrm{FA} + \mathbb{P}_\mathrm{MD},
\end{equation}
where $\mathbb{P}_\mathrm{FA} = \text{Pr}\big(\mathcal{D}_1|\mathcal{H}_0 \big)$ and $\mathbb{P}_\mathrm{MD} = \text{Pr}\big(\mathcal{D}_0|\mathcal{H}_1 \big)$
denote the false alarm probability and missed detection probability, respectively.

We consider the worst case where Willie follows the optimal decision rule to achieve the MDEP by applying the Neyman–Pearson criterion. Specifically, according to \eqref{signal}, the likelihood functions of $y_\mathrm{W}[l]$ under $\mathcal{H}_0$ and $\mathcal{H}_1$ can be expressed as 
\begin{equation}
\begin{cases}
	\mathbb{P}_0 = \frac{1}{\pi\lambda_0}e^{-\frac{|y_\mathrm{W}[l]|^2}{\lambda_0}}, & \mathcal{H}_0,  \\
    \mathbb{P}_1 = \frac{1}{\pi\lambda_1}e^{-\frac{|y_\mathrm{W}[l]|^2}{\lambda_1}}, & \mathcal{H}_1,  \\
\end{cases}	
\end{equation}
where $\lambda_0$ and $\lambda_1$ represent the mean powers for the received signals under $\mathcal{H}_0$ and $\mathcal{H}_1$, and are given by 
\begin{equation}\label{lam}
	\begin{cases}
		\lambda_0= |\mathbf{h}_\mathrm{AW}^\mathrm{H}\mathbf{w}_\mathrm{C}|^2+\sigma_\mathrm{W}^2, & \mathcal{H}_0,  \\
		\lambda_1= |\mathbf{h}_\mathrm{AW}^\mathrm{H}\mathbf{w}_\mathrm{B}|^2 +|\mathbf{h}_\mathrm{AW}^\mathrm{H}\mathbf{w}_\mathrm{C}|^2+\sigma_\mathrm{W}^2, & \mathcal{H}_1. 
	\end{cases}
\end{equation}
Then, the joint likelihood ratio over all $L$ symbols is used as the decision metric for detecting the presence of covert transmission, i.e., 
\begin{equation}
	\label{detection}
	 \frac{\mathbb{P}_0^L}{\mathbb{P}_1^L}
 \overset{\mathcal{D}_L}{\underset{\mathcal{D}_0}{\gtrless}} 1.
\end{equation}
Based on this detection rule, the corresponding MDEP can be derived as 
\begin{equation}
	\label{MDEP}
		\xi^*  = 1 - \mathcal{V}_T(\mathbb{P}_0^L, \mathbb{P}_1^L),
\end{equation}
where $\mathcal{V}_T(\mathbb{P}_0^L, \mathbb{P}_1^L)$ represents the total variation distance between $\mathbb{P}_0^L$ and $\mathbb{P}_1^L$.

\subsection{Covert Constraint}\label{D}
From the perspective of the Alice-to-Bob transmission, our objective is to ensure covertness by limiting Willie's achievable MDEP. The covert constraint is given as $\xi^* \geq 1- \epsilon$, where~$\epsilon \in [0,1]$ denotes the required covertness level. Nevertheless, the mathematical form of~\eqref{MDEP} is complicated and difficult to handle. To this end, we apply Pinsker's inequality to obtain a lower bound of the MDEP\cite{8714018}, which is given as\footnote{As  proven in \cite{8714018}, $\mathcal{D}(\mathbb{P}_0^L||\mathbb{P}_1^L)$ determines a tighter low bound on $\xi^*$ than  $\mathcal{D}(\mathbb{P}_1^L||\mathbb{P}_0^L)$ for Gaussian signalling. Therefore, we adopt the former KL divergence in our analysis.}
\begin{equation}
	\label{D1}
	\xi^* \ge 	1- \sqrt{\frac{1}{2}\mathcal{D}(\mathbb{P}_0^L, \mathbb{P}_1^L)},
\end{equation}
where $\mathcal{D}(\mathbb{P}_0^L, \mathbb{P}_1^L)$ is the Kullback-Leibler (KL) divergence from $\mathbb{P}_0^L$ to $\mathbb{P}_1^L$. Note that the received signal $y_\mathrm{W}[l]$ is identically independently distributed (i.i.d.), we have 
 \begin{equation}\label{D2}
 	\mathcal{D}(\mathbb{P}_0^L, \mathbb{P}_1^L)  = L \times \mathcal{D}(\mathbb{P}_0||\mathbb{P}_1),
 \end{equation}
and $\mathcal{D}(\mathbb{P}_0||\mathbb{P}_1)$ is given as
 \begin{equation}\label{D3}
 	\mathcal{D}(\mathbb{P}_0||\mathbb{P}_1) = \int_{y_\mathrm{W}[i]} \mathbb{P}_0\ln \frac{\mathbb{P}_0}{\mathbb{P}_1} \ d (y_\mathrm{W}[i]) = \ln \frac{\lambda_1}{\lambda_0}+ \frac{\lambda_0}{\lambda_1}-1,
 \end{equation}
 Based on \eqref{D1} $\sim$ \eqref{D3}, the covert constraint can be further expressed as 
\begin{equation}\label{ic}
	\ln \frac{\lambda_1}{\lambda_0}+ \frac{\lambda_0}{\lambda_1}-1 \le \frac{2\epsilon^2}{L}.
\end{equation}

To make \eqref{ic} more tractable, we reformulate it as follows. It is easy to know the function $ln(x)+\frac{1}{x}$ has the minimum value at $x=1$. Furthermore, the function is monotonically increasing for $x>1$ and decreasing for $0<x<1$.
Based on this property, the equation $ln(x)+\frac{1}{x}= \frac{2\epsilon^2}{L}+1$ admits two roots, denoted by $x_1(\epsilon)$ (with $0<x_1(\epsilon)<1$) and $x_2(\epsilon)$  (with $x_2(\epsilon)>1$) respectively. Therefore, the inequality $ln(x)+\frac{1}{x} \leq \frac{2\epsilon^2}{L}+1$ is equivalent to $x_1(\epsilon)<x<x_2(\epsilon)$. 
By treating $\frac{\lambda_1}{\lambda_0}$ as the variable $x$, and noting that $\lambda_1 \geq \lambda_0$ always holds according to \eqref{lam}, the constraint \eqref{ic} can thus be rewritten as $\frac{\lambda_1}{\lambda_0} \leq x_2(\epsilon)$.

\section{Covert Communication Under Known Warden Location}\label{perfect}

In this section, we consider an ideal scenario where Alice has full knowledge of Willie’s location, and the perfect channel vector $\mathbf{h}_{\mathrm{AW}}$ in~\eqref{h_AW} is assumed to be given and deterministic. Under this ideal condition, Alice operates solely in its communication role, without performing its sensing functionality. Our objective is to maximize Bob’s covert rate through the joint design of active beamforming at Alice and passive beamforming at RIS, while ensuring that the QoS requirement on Carol’s public communication is satisfied. To tackle the formulated non-concave problem, we propose a solution approach based on SCA technique, penalty method and SDR.

\subsection{Problem Formulation}
Aiming to maximize Bob's rate subject to the constraints on covertness and Carol's QoS, the optimization problem is given as
\begin{subequations}\label{P1}
	\begin{align}
		(\mathcal{P}1)	~~~& ~~\mathop {{\rm{Max}}}\limits_{\mathbf{w}_\mathrm{B},\mathbf{w}_\mathrm{C},\mathbf{\Lambda}} \log_2\big(1+\text{SINR}_\mathrm{B}\big), \label{1a} \\
		~~\text{s.t.} ~~~~
		& \log_2\big(1+\text{SINR}_\mathrm{C}\big) \geq R_0, \label{1b}\\
		& \| \mathbf{w}_\mathrm{B} \|^2 + \| \mathbf{w}_\mathrm{C} \|^2 \leq P, \label{c} \\
		& \theta_n \in [0, 2\pi), n=1,2,...,M,  \label{d}\\
		& \frac{\lambda_1}{\lambda_0} \leq x_2(\epsilon),  \label{e}
    \end{align}
\end{subequations}
where $R_0$ and $P$ denote the required minimum rate of Carol and maximum available power of Alice, respectively. Specifically, \eqref{1b} ensures that Carol’s public transmission rate is no less than $R_0$; \eqref{c} limits Alice's transmission power to the available budget $P$; \eqref{d} defines the feasible range for the RIS phase shifts; and \eqref{e} represents the covertness constraint against Willie’s detection.

To transform \eqref{P1} into a more tractable form, we reformulate the involved vector variables into matrix representations. Denoting the RIS beamforming vector as~$\mathbf{v}= [e^{j\theta_1},e^{j\theta_2},...,e^{j\theta_M}]^\mathrm{H}$, the channel vectors at Bob and Carol can be expressed as $\mathbf{g}_\mathrm{B}= \mathbf{v}^\mathrm{H}\mathbf{G_\mathrm{B}}$ and $\mathbf{g}_\mathrm{C}= \mathbf{v}^\mathrm{H}\mathbf{G_\mathrm{C}}$, where $\mathbf{G_\mathrm{B}}=\operatorname{diag}(\mathbf{h}_\mathrm{RB}^\mathrm{H})\mathbf{H}_\mathrm{AR}$ and 
$\mathbf{G_\mathrm{C}} =\operatorname{diag}(\mathbf{h}_\mathrm{RC}^\mathrm{H})\mathbf{H}_\mathrm{AR}$.
We further define the RIS beamforming matrix as $\mathbf{V} = \mathbf{v}\mathbf{v}^\mathrm{H}$, which satisfies $\mathbf{V} \succeq 0$, $\operatorname{Rank}(\mathbf{V}) = 1$, and $\operatorname{Diag}(\mathbf{V}) = \mathbf{1}_M$. Similarly, the transmit precoding matrices at Alice for Bob and Carol are defined as $\mathbf{W}_B = \mathbf{w}_\mathrm{B}\mathbf{w}_\mathrm{B}^\mathrm{H}$ and 
$\mathbf{W}_\mathrm{C} = \mathbf{w}_\mathrm{C}\mathbf{w}_\mathrm{C}^\mathrm{H}$ respectively, with the constraints $\mathbf{W}_\mathrm{B}\succeq 0$, $\mathbf{W}_\mathrm{C}\succeq 0$, and $\operatorname{Rank}(\mathbf{\mathbf{W}_\mathrm{B}})= \operatorname{Rank}(\mathbf{\mathbf{W}_\mathrm{C}}) = 1$. 
 Based on the above definitions, the equivalent channel gains at Bob and Carol can be rewritten as $|\mathbf{g}_\mathrm{B}\mathbf{w}_\mathrm{B}|^2 = \operatorname{Tr}(\mathbf{V}\mathbf{G}_\mathrm{B}\mathbf{W}_\mathrm{B}\mathbf{G}_\mathrm{B}^\mathrm{H})$,
$|\mathbf{g}_\mathrm{B}\mathbf{w}_\mathrm{C}|^2 = \operatorname{Tr}(\mathbf{V}\mathbf{G}_\mathrm{B}\mathbf{W}_\mathrm{C}\mathbf{G}_\mathrm{B}^\mathrm{H})$, respectively. 

Moreover, since $\log_2(1+x)$ is a monotonically increasing function, \eqref{1a} implies maximizing $\text{SINR}_\mathrm{B}$, which corresponds to maximizing the channel gain at Bob as per \eqref{sinrB}, and \eqref{1b} is equivalent to imposing a lower bound on $\text{SINR}_\mathrm{C}$. Therefore, The problem $\mathcal{P}1$ can be reformulated as 
 \begin{subequations}\label{P1.1}
	\begin{align}
    \hspace{-5pt} (\mathcal{P}1.1)&~~~\mathop {{\rm{Max}}}\limits_{\mathbf{W}_\mathrm{B},\mathbf{W}_\mathrm{C},\mathbf{V}} 
		\operatorname{Tr}(\mathbf{V}\mathbf{G}_\mathrm{B}\mathbf{W}_\mathrm{B}\mathbf{G}_\mathrm{B}^\mathrm{H}),\label{2a} \\
	&\hspace{-25pt}\text{s.t.} ~~
		\rho\operatorname{Tr}(\mathbf{V}\mathbf{G}_\mathrm{C}\mathbf{W}_\mathrm{B}\mathbf{G}_\mathrm{C}^\mathrm{H}) +\rho\sigma_\mathrm{C}^2 \leq \operatorname{Tr}(\mathbf{V}\mathbf{G}_\mathrm{C}\mathbf{W}_\mathrm{C}\mathbf{G}_\mathrm{C}^\mathrm{H}), \label{2b} \\
	&\hspace{-5pt}	\rho\operatorname{Tr}(\mathbf{V}\mathbf{G}_\mathrm{B}\mathbf{W}_\mathrm{B}\mathbf{G}_\mathrm{B}^\mathrm{H}) +\rho\sigma_\mathrm{B}^2 \leq  \operatorname{Tr}(\mathbf{V}\mathbf{G}_\mathrm{B}\mathbf{W}_\mathrm{C}\mathbf{G}_\mathrm{B}^\mathrm{H}), \label{2c} \\
	&\hspace{-5pt}\operatorname{Tr}(\mathbf{W}_\mathrm{B})+ \operatorname{Tr}(\mathbf{W}_\mathrm{C}) \leq P, \label{2d}\\
	&\hspace{-5pt} \mathbf{V} \succeq 0, \label{2e}\\
	&\hspace{-5pt}\operatorname{Rank}(\mathbf{V}) = 1, \label{2f}\\
	&\hspace{-5pt}\operatorname{Diag}(\mathbf{V}) = \mathbf{1}_M, \label{2g}\\
	&\hspace{-5pt}\mathbf{W}_\mathrm{B}\succeq 0, \mathbf{W}_\mathrm{C}\succeq 0, \label{2h}\\
	&\hspace{-5pt}\operatorname{Rank}(\mathbf{\mathbf{W}_\mathrm{B}}) = \operatorname{Rank}(\mathbf{\mathbf{W}_\mathrm{C}}) = 1, \label{2i}\\
	&\hspace{-5pt} \frac{\operatorname{Tr}(\mathbf{h}_\mathrm{AW}\mathbf{h}_\mathrm{AW}^\mathrm{H}(\mathbf{W}_\mathrm{C}+\mathbf{W}_\mathrm{B}))+\sigma_\mathrm{W}^2}{\operatorname{Tr}(\mathbf{h}_\mathrm{AW}\mathbf{h}_\mathrm{AW}^\mathrm{H}\mathbf{W}_\mathrm{C})+\sigma_\mathrm{W}^2} \leq x_2(\epsilon), \label{2j} 
	\end{align}
\end{subequations}
where $\rho = 2^{R_0} -1$ represents the minimum required SINR for Carol. The constraints \eqref{2b} and \eqref{2c} are equivalent to \eqref{1b}, while \eqref{2d}, \eqref{2g}, and \eqref{2j} correspond to \eqref{c}, \eqref{d}, and \eqref{e}, respectively.

\begin{figure*}[hbp]
	\hrulefill
	\begin{equation}\label{AB_le}
		\begin{aligned}
			{f}_\text{lb}\left(\mathbf{A},\mathbf{B},\mathbf{A}_0,\mathbf{B}_0\right)&=\frac{1}{2}\left\|\mathbf{A}_0+\mathbf{B}_0\right\|_{F}^2+\operatorname{Tr}\left(\left(\mathbf{A}_0+\mathbf{B}_0\right)\left(\mathbf{A}-\mathbf{A}_0+\mathbf{B}-\mathbf{B}_0\right)\right)-\frac{1}{2}\left\|\mathbf{A}\right\|_{F}^2-\frac{1}{2}\left\|\mathbf{B}\right\|_{F}^2 \buildrel (b) \over\le \operatorname{Tr}\left(\mathbf{A}\mathbf{B}\right) \\
			&\buildrel (c) \over\le \frac{1}{2}\left\|\mathbf{A}+\mathbf{B}\right\|_{F}^2+\frac{1}{2}\left\|\mathbf{A}_0\right\|_{F}^2-\operatorname{Tr}\left(\mathbf{A}_0\mathbf{A}\right)+\frac{1}{2}\left\|\mathbf{B}_0\right\|_{F}^2-\operatorname{Tr}\left(\mathbf{B}_0\mathbf{B}\right)={f}_\text{ub}\left(\mathbf{A},\mathbf{B},\mathbf{A}_0,\mathbf{B}_0\right).
		\end{aligned}
	\end{equation}
\end{figure*}

\subsection{Proposed Solution}
It can be observed that the matrices $\mathbf{V}$ and $\mathbf{W}_\mathrm{B}$, $\mathbf{W}_\mathrm{C}$ are coupled multiplicatively in \eqref{2a}, \eqref{2b}, and \eqref{2c}, and the rank-one constraints \eqref{2f} and \eqref{2i} further increase the difficulty for solving $\mathcal{P}1.1$ directly. In the following, we address these challenges using SCA, penalty method, and SDR technique.

Firstly, we present the following theorem to transform \eqref{2a}, \eqref{2b}, and \eqref{2c} into a more tractable form.

\begin{theorem}\label{Theorem1}  \emph{For any two Hermitian and positive semi-definite matrices} $\mathbf{A}\in \mathbb{H}^M $ \emph{and} $ \mathbf{B}\in \mathbb{H}^M$, \emph{the inequality~\eqref{AB_le} always holds at a given point $\left\{\mathbf{A}_0, \mathbf{B}_0\right\}$, where ${f}_\text{lb}(\cdot)$ and ${f}_\text{ub}(\cdot)$ denote the lower bound concave function and the upper bound convex function of $\operatorname{Tr}(\mathbf{A}\mathbf{B})$, respectively. Both inequalities (b) and (c) are tight when $\mathbf{A}=\mathbf{A}_0$ and~$\mathbf{B}=\mathbf{B}_0$.}
	\begin{proof}
		For matrices $\mathbf{A}$ and $\mathbf{B}$ defined in the theorem, there exists that \begin{equation}\label{AB}
			\operatorname{Tr}\left(\mathbf{A}\mathbf{B}\right)=\frac{1}{2}\left\|\mathbf{A}+\mathbf{B}\right\|_{F}^2-\Big(\frac{1}{2}\left\|\mathbf{A}\right\|_{F}^2+\frac{1}{2}\left\|\mathbf{B}\right\|_{F}^2\Big).\vspace{-0.1cm}
		\end{equation}
		The right-hand side of~\eqref{AB} is in the form of a difference of convex (DC) functions. The global lower bound of the first convex term can be derived by applying the first-order Taylor expansion at the given point $\left\{\mathbf{A}_0, \mathbf{B}_0\right\}$, leading to
		\begin{equation}\label{AB_1}
			\begin{aligned}
				\frac{1}{2}\left\|\mathbf{A}+\mathbf{B}\right\|_{F}^2 &\ge \frac{1}{2}\left\|\mathbf{A}_0+\mathbf{B}_0\right\|_{F}^2\\&+\operatorname{Tr}\left(\left(\mathbf{A}_0+\mathbf{B}_0\right)\left(\mathbf{A}-\mathbf{A}_0+\mathbf{B}-\mathbf{B}_0\right)\right).
			\end{aligned}
		\end{equation} 
		Similarly, the lower bound of the second convex term can be obtained based on
		\begin{equation}\label{AB_2}
			\begin{aligned}
					\frac{1}{2}\left\|\mathbf{X}\right\|_{F}^2 &\ge 	\frac{1}{2}\left\|\mathbf{X}_0\right\|_{F}^2+\operatorname{Tr}\left(\mathbf{X}_0\left(\mathbf{X}-\mathbf{X}_0\right)\right)\\
				&=\operatorname{Tr}\left(\mathbf{X}_0\mathbf{X}\right)-\frac{1}{2}\left\|\mathbf{X}_0\right\|_{F}^2,
			\end{aligned} 
		\end{equation} 
		where $\left\{\mathbf{X}, \mathbf{X}_0\right\}$ could be $\left\{\mathbf{A}, \mathbf{A}_0\right\}$ or $\left\{\mathbf{B}, \mathbf{B}_0\right\}$. Then, substituting~\eqref{AB_1} into~\eqref{AB} leads to the inequality (b) in~\eqref{AB_le}, while substituting~\eqref{AB_2} yields the inequality (c).
	\end{proof}
\end{theorem}

We treat $\mathbf{V}$ and $\mathbf{G}_\mathbf{B}\mathbf{W}_\mathbf{B}\mathbf{G}_\mathbf{B}^\mathrm{H}$ as $\mathbf{A}$ and $\mathbf{B}$, respectively, and apply the SCA method to solve the proposed problem. Then, given $\left\{\mathbf{V}^{(n)}, \mathbf{W}_\mathrm{B}^{(n)}, \mathbf{W}_\mathrm{C}^{(n)}\right\}$ in the $n$-th iteration, the lower bound of the objective in \eqref{2a} can be derived based on \textbf{Theorem \ref{Theorem1}} as
\begin{small} 
\begin{equation}
\overline{\operatorname{Tr}(\mathbf{V}\mathbf{G}_\mathrm{B}\mathbf{W}_\mathrm{B}\mathbf{G}_\mathrm{B}^\mathrm{H})} = f_\text{lb}\left(\mathbf{V},\mathbf{G}_\mathrm{B}\mathbf{W}_\mathrm{B}\mathbf{G}_\mathrm{B}^\mathrm{H},\mathbf{V}^{(n)},\mathbf{G}_\mathrm{B}\mathbf{W}_\mathrm{B}^{(n)}\mathbf{G}_\mathrm{B}^\mathrm{H}\right).	
\end{equation} 
\end{small}Furthermore, we apply inequalities (b) and (c) in \eqref{AB_le} to the right-hand side and left-hand side of the constraint \eqref{2b}, respectively, and it can be reformulated as
\begin{equation}\label{c1}
	\begin{aligned}
		&\rho {f}_\text{ub}\left(\mathbf{V},\mathbf{G}_\mathrm{C}\mathbf{W}_\mathrm{B}\mathbf{G}_\mathrm{C}^\mathrm{H},\mathbf{V}^{(n)},\mathbf{G}_\mathrm{C}\mathbf{W}_\mathrm{B}^{(n)}\mathbf{G}_\mathrm{C}^\mathrm{H}\right) + \rho\sigma_\mathrm{C}^2 \\
		&\leq
		{f}_\text{lb}\left(\mathbf{V},\mathbf{G}_\mathrm{C}\mathbf{W}_\mathrm{C}\mathbf{G}_\mathrm{C}^\mathrm{H},\mathbf{V}^{(n)},\mathbf{G}_\mathrm{C}\mathbf{W}_\mathrm{C}^{(n)}\mathbf{G}_\mathrm{C}^\mathrm{H}\right).
	\end{aligned}
\end{equation}
Similarly, the constraint \eqref{2c} can be rewritten as
\begin{equation}\label{c2}
	\begin{aligned}
		&\rho {f}_\text{ub}\left(\mathbf{V},\mathbf{G}_\mathrm{B}\mathbf{W}_\mathrm{B}\mathbf{G}_\mathrm{B}^\mathrm{H},\mathbf{V}^{(n)},\mathbf{G}_\mathrm{B}\mathbf{W}_\mathrm{B}^{(n)}\mathbf{G}_\mathrm{B}^\mathrm{H}\right) + \rho\sigma_\mathrm{B}^2 \\
		&\leq
		{f}_\text{lb}\left(\mathbf{V},\mathbf{G}_\mathrm{B}\mathbf{W}_\mathrm{C}\mathbf{G}_\mathrm{B}^\mathrm{H},\mathbf{V}^{(n)},\mathbf{G}_\mathrm{B}\mathbf{W}_\mathrm{C}^{(n)}\mathbf{G}_\mathrm{B}^\mathrm{H}\right).
	\end{aligned}
\end{equation}
After the above transformation, the objective function \eqref{2a} is converted into a concave form, and \eqref{2b} and \eqref{2c} are approximated as convex constraints.

Next, we utilize the penalty method to tackle the rank-one constraint \eqref{2f}. It is worth noting that for any $\mathbf{V} \in \mathbb{H}^{M}$ with $\mathbf{V} \succeq 0$, the inequality $\left|\mathbf{V}\right|{*} - \left|\mathbf{V}\right|_{2} \geq 0$ always holds, with equality if and only if $\mathbf{V}$ is a rank-one matrix. Therefore, the non-convex constraint \eqref{2f} is equivalent to the following equality constraint:
\begin{equation}\label{rank_one}
	\left\|\mathbf{V}\right\|_{*}-\left\|\mathbf{V}\right\|_{2}=0,
\end{equation}
where $\left\|\mathbf{V}\right\|_{*}$ and $\left\|\mathbf{V}\right\|_{2}$ denote the nuclear norm and the spectral norm, respectively.
To address this nonlinear equality constraint, a penalty-based approach is adopted. Specifically, a penalty factor $\eta>0$ is introduced, and the constraint  \eqref{rank_one} is relaxed by incorporating it into the objective function as a penalty term, resulting in the following reformulated optimization problem:
\begin{subequations}\label{P1.3}
	\begin{align}
    (\mathcal{P}1.2)~~&\mathop {{\rm{Max}}}\limits_{\mathbf{W}_\mathrm{B},\mathbf{W}_\mathrm{C},\mathbf{V}} 
	\overline{\operatorname{Tr}(\mathbf{V}\mathbf{G}_\mathrm{B}\mathbf{W}_\mathrm{B}\mathbf{G}_\mathrm{B}^\mathrm{H})} - \eta \big(\big\|\mathbf{V}\big\|_* -\left\|\mathbf{V}\right\|_{2} \big),\\
    ~~\text{s.t.} ~~~~
	& \eqref{c1}, \eqref{c2}, \eqref{2d},\eqref{2e},\eqref{2g} \sim \eqref{2j}.
    \end{align}
\end{subequations}Since the introduced penalty term is non-convex with a DC structure, we linearize the spectral norm $\left\|\mathbf{V}\right\|_2$ via its first-order Taylor approximation at the given point $\mathbf{V}^{(n)}$. Consequently, a convex upper bound can be obtained as
\begin{equation}
		\big\|\mathbf{V}\big\|_* - \big\|\mathbf{V}\big\|_2 
		 \leq \big\|\mathbf{V}\big\|_* - \overline{\mathbf{V}}^{(n)},	
\end{equation}
where
{\small \begin{equation*}
	\overline{\mathbf{V}}^{(n)} \triangleq \left\|\mathbf{V}^{(n)}\right\|_2 + \mathrm{Tr}\left[\overline{\mathbf{u}}\left(\mathbf{V}^{(n)}\right)\left(\overline{\mathbf{u}}\left(\mathbf{V}^{(n)}\right)\right)^{\mathrm{H}} \left(\mathbf{V} - \mathbf{V}^{(n)}\right)\right],
\end{equation*}}and $\overline{\mathbf{u}}\left(\mathbf{V}^{(n)}\right)$ represents the eigenvector associated with the largest eigenvalue of $\mathbf{V}^{(n)}$.

Now, the main challenge in solving \eqref{P1.3} stems from the non-convex rank-one constraints on $\mathbf{W}_\mathrm{B}$ and $\mathbf{W}_\mathrm{C}$. To address this issue, we adopt the SDR method, i.e., solving the formulated problem with ignoring the rank-one constraint \eqref{2i}. The feasibility and tightness of this relaxation are guaranteed by the following theorem.

\begin{theorem}\label{Theorem2}  \emph{Without loss of optimality, the solutions $\{\mathbf{W}_\mathrm{B}, \mathbf{W}_\mathrm{C}\}$ obtained for the problem $\mathcal{P}1.2$ without the constraint \eqref{2i}, always satisfy $\operatorname{Rank}(\mathbf{W}_\mathrm{B}) =1$ and $\operatorname{Rank}(\mathbf{W}_\mathrm{C}) =1$
This conclusion can be rigorously proved based on Slater’s constraint qualification and the Karush-Kuhn-Tucker (KKT) conditions. Detailed derivations can be found in \cite{9570143,9183907}.}
\end{theorem}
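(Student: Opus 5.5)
The plan is to fix $\mathbf{V}$ at its optimal value $\mathbf{V}^\star$ for the relaxed version of $\mathcal{P}1.2$, i.e., $\mathcal{P}1.2$ without \eqref{2i}. The rank properties of $\mathbf{W}_\mathrm{B}$ and $\mathbf{W}_\mathrm{C}$ are then studied through the KKT conditions of the remaining subproblem in $(\mathbf{W}_\mathrm{B},\mathbf{W}_\mathrm{C})$, along the lines of \cite{9570143,9183907}.

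\emph{Convexity and Slater.} With $\mathbf{V}=\mathbf{V}^\star$ fixed, the penalty term is a constant. The objective $\overline{\operatorname{Tr}(\mathbf{V}\mathbf{G}_\mathrm{B}\mathbf{W}_\mathrm{B}\mathbf{G}_\mathrm{B}^\mathrm{H})}$ is concave in $\mathbf{W}_\mathrm{B}$ by \textbf{Theorem~\ref{Theorem1}}. Constraints \eqref{c1} and \eqref{c2} are convex, \eqref{2d} is linear, and \eqref{2j} can be cleared of its denominator to give a linear inequality in $(\mathbf{W}_\mathrm{B},\mathbf{W}_\mathrm{C})$. So the subproblem is a convex semidefinite program. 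To verify Slater's condition, I would take a strictly feasible point: $\mathbf{W}_\mathrm{C}$ slightly inflated in the direction of $\mathbf{G}_\mathrm{C}^\mathrm{H}\mathbf{V}^\star\mathbf{G}_\mathrm{C}$, and $\mathbf{W}_\mathrm{B}=\delta\mathbf{I}$ with small $\delta>0$, with total power strictly below $P$. This requires assuming the original feasible set has some slack, i.e., $\rho$ is not at its feasibility limit. Strong duality then holds, and the KKT conditions characterize the optimum.

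\emph{KKT structure.} I would attach multipliers $\mu\ge 0$ to \eqref{2d}, $\lambda_1,\lambda_2\ge 0$ to \eqref{c1} and \eqref{c2}, and $\nu\ge 0$ to the linearized \eqref{2j}. I would also attach dual matrices $\mathbf{Z}_\mathrm{B},\mathbf{Z}_\mathrm{C}\succeq 0$ to \eqref{2h}. The gradient of every term with respect to $\mathbf{W}_\mathrm{B}$ has the form $\mathbf{G}_i^\mathrm{H}\mathbf{X}_i\mathbf{G}_i$ or $\mathbf{h}_\mathrm{AW}\mathbf{h}_\mathrm{AW}^\mathrm{H}$; for the quadratic Frobenius parts, $\mathbf{X}_i$ is an affine function of $\mathbf{V}^\star$, $\mathbf{V}^{(n)}$ and $\mathbf{G}_i\mathbf{W}_\mathrm{B}\mathbf{G}_i^\mathrm{H}$. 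Stationarity should give
\begin{equation*}
\mathbf{Z}_\mathrm{B} = \mu\mathbf{I} + \nu x_2(\epsilon)\,\mathbf{h}_\mathrm{AW}\mathbf{h}_\mathrm{AW}^\mathrm{H} + \boldsymbol{\Psi}_\mathrm{B} - \mathbf{G}_\mathrm{B}^\mathrm{H}\mathbf{T}_\mathrm{B}\mathbf{G}_\mathrm{B},
\end{equation*}
where $\boldsymbol{\Psi}_\mathrm{B}\succeq 0$ collects the terms from the ``cost'' sides of the constraints. The last term is the positive-definite contribution of the objective. An analogous expression, with the roles of the constraints swapped, holds for $\mathbf{Z}_\mathrm{C}$.

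\emph{Rank argument.} First I would show $\mu>0$. If $\mu=0$, then $\mathbf{Z}_\mathrm{B}\succeq 0$ would force the objective gradient to be dominated, which contradicts the fact that scaling $\mathbf{W}_\mathrm{B}$ up strictly increases Bob's gain. Hence the power constraint is active and $\boldsymbol{\Omega}_\mathrm{B}\triangleq\mu\mathbf{I}+\nu x_2\mathbf{h}_\mathrm{AW}\mathbf{h}_\mathrm{AW}^\mathrm{H}+\boldsymbol{\Psi}_\mathrm{B}\succ 0$ has full rank $N_t$. Next, since $\mathbf{g}_\mathrm{B}=\mathbf{v}^{\star\mathrm{H}}\mathbf{G}_\mathrm{B}$, I would argue that the effective subtracted term is rank one, i.e., proportional to $\mathbf{g}_\mathrm{B}^\mathrm{H}\mathbf{g}_\mathrm{B}$ once $\mathbf{V}^\star$ is rank one thanks to the penalty. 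Then $\operatorname{Rank}(\mathbf{Z}_\mathrm{B})\ge N_t-1$. Complementary slackness $\operatorname{Tr}(\mathbf{Z}_\mathrm{B}\mathbf{W}_\mathrm{B})=0$ then gives $\operatorname{Rank}(\mathbf{W}_\mathrm{B})\le 1$, and $\mathbf{W}_\mathrm{B}\neq\mathbf{0}$ because otherwise the objective is zero. The same argument, using $\mathbf{g}_\mathrm{C}$ and the active Carol-QoS constraint, gives $\operatorname{Rank}(\mathbf{W}_\mathrm{C})=1$.

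\emph{Main obstacle.} The hardest step is the claim that the subtracted term has rank one. The SCA surrogates $f_\text{lb}$ and $f_\text{ub}$ contribute gradient terms of the form $\mathbf{G}^\mathrm{H}(\mathbf{V}^{(n)}+\mathbf{B}^{(n)}-\mathbf{B})\mathbf{G}$, which are not obviously rank one. I would first try to bound these terms so that only the rank-one part $\mathbf{G}^\mathrm{H}\mathbf{V}^\star\mathbf{G}$ survives with a negative sign, and absorb the rest into $\boldsymbol{\Omega}_\mathrm{B}$, using $\mathbf{B}\succeq 0$ and the penalty-induced $\mathbf{V}^\star\approx\mathbf{v}^\star\mathbf{v}^{\star\mathrm{H}}$. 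If that fails, the fallback is the standard rank-reduction construction. Starting from an optimal $\mathbf{W}_\mathrm{B}$, I would build a rank-one $\tilde{\mathbf{w}}_\mathrm{B}\tilde{\mathbf{w}}_\mathrm{B}^\mathrm{H}$ that preserves $\mathbf{g}_\mathrm{B}\mathbf{W}_\mathrm{B}\mathbf{g}_\mathrm{B}^\mathrm{H}$ and does not increase $\mathbf{g}_\mathrm{C}\mathbf{W}_\mathrm{B}\mathbf{g}_\mathrm{C}^\mathrm{H}$, $\mathbf{h}_\mathrm{AW}^\mathrm{H}\mathbf{W}_\mathrm{B}\mathbf{h}_\mathrm{AW}$ or $\operatorname{Tr}(\mathbf{W}_\mathrm{B})$, evaluated on the original (non-linearized) constraints at the converged point.
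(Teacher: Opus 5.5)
You follow the route the paper points to (Slater plus KKT), but two steps of your execution do not go through.

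\emph{First}, ``the same argument'' does not give $\operatorname{Rank}(\mathbf{W}_\mathrm{C})=1$. This fails even in the idealized setting where $\mathbf{V}$ is fixed and rank one and the exact trace terms replace the surrogates of Theorem~\ref{Theorem1}. Carol's precoder is rewarded by three constraints: \eqref{2b}; \eqref{2c}, which is Bob's SIC decoding of $s_\mathrm{C}$; and \eqref{2j}, where $\mathbf{W}_\mathrm{C}$ is the cover and enters with coefficient $-(x_2(\epsilon)-1)$. Stationarity therefore gives $\mathbf{Z}_\mathrm{C}=\mu\mathbf{I}-\lambda_1\mathbf{g}_\mathrm{C}^\mathrm{H}\mathbf{g}_\mathrm{C}-\lambda_2\mathbf{g}_\mathrm{B}^\mathrm{H}\mathbf{g}_\mathrm{B}-\nu(x_2(\epsilon)-1)\mathbf{h}_\mathrm{AW}\mathbf{h}_\mathrm{AW}^\mathrm{H}$. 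Complementary slackness then only places the range of $\mathbf{W}_\mathrm{C}$ in the top eigenspace of a PSD matrix of rank up to three, so you get $\operatorname{Rank}(\mathbf{W}_\mathrm{C})\le 3$, not $1$.

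Two smaller points in the same setting. In $\mathbf{Z}_\mathrm{B}$ the covertness term is $\nu\mathbf{h}_\mathrm{AW}\mathbf{h}_\mathrm{AW}^\mathrm{H}$, not $\nu x_2(\epsilon)\mathbf{h}_\mathrm{AW}\mathbf{h}_\mathrm{AW}^\mathrm{H}$. Your reason for $\mu>0$ ignores that scaling $\mathbf{W}_\mathrm{B}$ alone can violate \eqref{2c} or \eqref{2j}. Instead, $\mu>0$ follows from $\mathbf{Z}_\mathrm{C}\succeq 0$: $\mu=0$ forces $\lambda_1=\lambda_2=\nu=0$, and then $\mathbf{Z}_\mathrm{B}=-\mathbf{g}_\mathrm{B}^\mathrm{H}\mathbf{g}_\mathrm{B}$, which is not PSD.

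The clean way to close the idealized case is Huang and Palomar's rank-reduction theorem for separable SDPs. For fixed $\mathbf{V}$, \eqref{2a}--\eqref{2d} with \eqref{2j} is an SDP in $(\mathbf{W}_\mathrm{B},\mathbf{W}_\mathrm{C})$ with four linear constraints. Given your Slater point, it has an optimal solution with $\operatorname{Rank}^2(\mathbf{W}_\mathrm{B})+\operatorname{Rank}^2(\mathbf{W}_\mathrm{C})\le 4$. Since $R_0>0$ forces $\mathbf{W}_\mathrm{C}\neq\mathbf{0}$ and a positive rate forces $\mathbf{W}_\mathrm{B}\neq\mathbf{0}$, both are rank one. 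This needs neither $\mu>0$ nor a rank-one $\mathbf{V}$.

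\emph{Second}, the step you call the main obstacle cannot be repaired as you suggest for $\mathcal{P}1.2$ as written.

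\begin{itemize}
\item Since $f_\text{lb}$ in \eqref{AB_le} is separable in $\mathbf{A}$ and $\mathbf{B}$, the objective's gradient in $\mathbf{W}_\mathrm{B}$ is $\mathbf{G}_\mathrm{B}^\mathrm{H}\big(\mathbf{V}^{(n)}+\mathbf{G}_\mathrm{B}(\mathbf{W}_\mathrm{B}^{(n)}-\mathbf{W}_\mathrm{B})\mathbf{G}_\mathrm{B}^\mathrm{H}\big)\mathbf{G}_\mathrm{B}$. It does not involve $\mathbf{V}^\star$ at all, so a penalty-induced rank-one $\mathbf{V}^\star$ does not help.
\item Its negative part $\mathbf{G}_\mathrm{B}^\mathrm{H}(\mathbf{V}^{(n)}+\mathbf{G}_\mathrm{B}\mathbf{W}_\mathrm{B}^{(n)}\mathbf{G}_\mathrm{B}^\mathrm{H})\mathbf{G}_\mathrm{B}$ generally has rank two, even for rank-one iterates.
\item The surrogate constraints \eqref{c1} and \eqref{c2} contribute terms $\rho\mathbf{G}^\mathrm{H}\big(\mathbf{V}^\star+\mathbf{G}(\mathbf{W}_\mathrm{B}-\mathbf{W}_\mathrm{B}^{(n)})\mathbf{G}^\mathrm{H}\big)\mathbf{G}$ that are not sign-definite, so $\boldsymbol{\Psi}_\mathrm{B}\succeq 0$ is also unfounded.
\end{itemize}

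More fundamentally, the concave term $-\frac{1}{2}\|\mathbf{G}_\mathrm{B}\mathbf{W}_\mathrm{B}\mathbf{G}_\mathrm{B}^\mathrm{H}\|_F^2$ rewards spreading power. Suppose $\mathbf{G}_\mathrm{B}^\mathrm{H}\mathbf{G}_\mathrm{B}=\mathbf{I}$, only the power constraint binds, and $\mathbf{G}_\mathrm{B}^\mathrm{H}(\mathbf{V}^{(n)}+\mathbf{G}_\mathrm{B}\mathbf{W}_\mathrm{B}^{(n)}\mathbf{G}_\mathrm{B}^\mathrm{H})\mathbf{G}_\mathrm{B}$ has a double top eigenvalue $k$. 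Then spending power $t$ equally on the two top eigenvectors yields $kt-t^2/4$, while every rank-one $\mathbf{W}_\mathrm{B}$ of trace $t$ yields at most $kt-t^2/2$. So no KKT bookkeeping can give exact rank one for the SCA surrogate problem in general. A rigorous version must be stated for the formulation in which, for fixed $\mathbf{V}$, objective and constraints are linear in the precoders, as in the previous paragraph; that is the setting of the KKT arguments you are borrowing.

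Your fallback does move to that setting, but it then proves a claim about a different problem than $\mathcal{P}1.2$. Moreover, the construction $\tilde{\mathbf{w}}_\mathrm{B}=\mathbf{W}_\mathrm{B}\mathbf{g}_\mathrm{B}^\mathrm{H}/\sqrt{\mathbf{g}_\mathrm{B}\mathbf{W}_\mathrm{B}\mathbf{g}_\mathrm{B}^\mathrm{H}}$ works for Bob because $\tilde{\mathbf{w}}_\mathrm{B}\tilde{\mathbf{w}}_\mathrm{B}^\mathrm{H}\preceq\mathbf{W}_\mathrm{B}$. It has no analogue for $\mathbf{W}_\mathrm{C}$, which must keep three quadratic forms from decreasing.
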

Based on the aforementioned SCA, penalty method and SDR, the original problem $\mathcal{P}1.1$ is finally transformed into

\begin{subequations}\label{P1.4}
	\begin{align}
		(\mathcal{P}1.3)	~~&\mathop {{\rm{Max}}}\limits_{\mathbf{W}_\mathrm{B},\mathbf{W}_\mathrm{C},\mathbf{V}} 
		\overline{\operatorname{Tr}(\mathbf{V}\mathbf{G}_\mathrm{B}\mathbf{W}_\mathrm{B}\mathbf{G}_\mathrm{B}^\mathrm{H})} - \eta \big(\big\|\mathbf{V}\big\|_* - \overline{\mathbf{V}}^{(n)} \big),\label{1.3a} \\
		~~\text{s.t.} ~~~~
		& \eqref{c1}, \eqref{c2},\eqref{2d},\eqref{2e},\eqref{2g},\eqref{2h},\eqref{2j},
	\end{align}
\end{subequations}
which is a standard concave semidefinite program (SDP) and can be efficiently solved using existing solvers such as CVX.
It is verified that when $\eta \to \infty$, the solution of problem \eqref{P1.4} always satisfies the rank-one constraint \eqref{2f}~\cite{ben1997penalty}. However, an excessively large value of $\eta$ may cause the penalty term to become dominant in the objective function, which adversely affects the optimization performance. To avoid this, $\eta$ is initialized to a small value, and a two-loop iterative algorithm is employed. In the inner loop, problem \eqref{P1.4} is solved with a given $\eta$. In the outer loop, $\eta$ is updated according to $\eta^{(i+1)} = \omega \eta^{(i)}$, where $\omega > 1$ represents the scaling factor, and $i$ is the iteration index of the outer loop. The termination condition of the algorithm is that the value of the penalty term falls below a predefined threshold, denoted by $\mu$.

\begin{algorithm}[tbp]
	\caption{Joint optimization algorithm for $\mathcal{P}1$}
	\label{algorithm1}
	\begin{algorithmic}[1] 
		\STATE Initialize feasible points $\big\{\mathbf{V}^{(0)}, \mathbf{W}_\mathrm{B}^{(0)}, \mathbf{W}_\mathrm{C}^{(0)}\big\}$, and the penalty factor $\eta^{(0)}$.
		\STATE Set the iteration index $i=0$ for outer loop.
		\STATE \textbf{Repeat}
		\STATE \hspace{8pt} Set the iteration index $n=0$ for inner loop.
		\STATE \hspace{8pt} \textbf{Repeat}
		\STATE \hspace{20pt} Solve problem~\eqref{P1.4} with $\big\{\mathbf{V}^{(n)}, \mathbf{W}_\mathrm{B}^{(n)}, \mathbf{W}_\mathrm{C}^{(n)}\big\}$, $\eta^{(i)}$.
		\STATE \hspace{20pt} Update $\big\{\mathbf{V}^{(n+1)}, \mathbf{W}_\mathrm{B}^{(n+1)}, \mathbf{W}_\mathrm{C}^{(n+1)}\big\}$, and $n=n+1$.
		\STATE \hspace{7pt} \textbf{Until} the predefined accuracy or maximum iteration number is reached.
		\STATE \hspace{7pt} Update  $\big\{\mathbf{V}^{(0)}, \mathbf{W}_\mathrm{B}^{(0)}, \mathbf{W}_\mathrm{C}^{(0)}\big\}$ =  $\big\{\mathbf{V}^{(n)}, \mathbf{W}_\mathrm{B}^{(n)}, \mathbf{W}_\mathrm{C}^{(n)}\big\}$.
		\STATE \hspace{7pt} Update $\eta^{(i+1)} = \omega \eta^{(i)}$, and $i=i+1$.
		\STATE \textbf{Until} $ \big\|\mathbf{V}\big\|_*^{(i)} - \big\|\mathbf{V}\big\|_2^{(i)} \leq \mu$.
	\end{algorithmic}
\end{algorithm}

The details of the proposed algorithm are summarized in \textbf{Algorithm~\ref{algorithm1}}. As the objective function~\eqref{1.3a} monotonically increases during the inner iterations and is upper-bounded, the algorithm is guaranteed to converge to a stationary point of the original problem. The computational complexity of the proposed algorithm is mainly attributed to solving the SDP problem \eqref{P1.4}, whose complexity is $\mathcal{O}(N_t^{3.5} + M^{3.5})$ using the interior-point method~\cite{5447068}. Therefore, the overall complexity of \textbf{Algorithm~\ref{algorithm1}} is $\mathcal{O}(I_o I_i (N_t^{3.5} + M^{3.5}))$, with $I_o$ and $I_i$ denoting the numbers of outer and inner iterations required for convergence, respectively. 

\begin{figure*}[b] 
	\rule{\textwidth}{0.4pt} 
	\setcounter{equation}{34}
	\begin{equation}
		\label{CRB}
		\mathrm{CRB}(\hat{\theta}_\mathrm{W}) = \frac{\sigma_A^2\operatorname{Tr}(\mathbf{A}(\hat{\theta}_\mathrm{W})^\mathrm{H}\mathbf{A}(\hat{\theta}_\mathrm{W})\mathbf{W}) }{2|\alpha|^2L\left( \operatorname{Tr}(\dot{\mathbf{A}}(\hat{\theta}_\mathrm{W})^\mathrm{H}\dot{\mathbf{A}}(\hat{\theta}_\mathrm{W})\mathbf{W} )\operatorname{Tr}(\mathbf{A}(\hat{\theta}_\mathrm{W})^\mathrm{H}\mathbf{A}(\hat{\theta}_\mathrm{W})\mathbf{W}) - |\operatorname{Tr}(\dot{\mathbf{A}}(\hat{\theta}_\mathrm{W})^\mathrm{H}\mathbf{A}(\hat{\theta}_\mathrm{W})\mathbf{W})|^2 \right)}
	\end{equation}
	\setcounter{equation}{37}
	\begin{equation}
		\label{delta_a}
		\| \Delta\mathbf{a}_\mathrm{A}(\hat{\theta}_\mathrm{W},\Delta\theta_\mathrm{W})\|^2 = \sum_{n=1}^{N_t} 2- 2\cos \Big(\frac{2\pi(n-1)d}{\lambda}\big(\sin\big(\hat{\theta}_\mathrm{W}+\Delta\theta_\mathrm{W}\big)-\sin\hat{\theta}_\mathrm{W}\big)\Big)
	\end{equation}
\end{figure*}

\section{Covert Communication Under Unknown Warden Location}\label{imperfect}
In this section, we further investigate a more practical and challenging scenario where Alice has no prior knowledge of Willie’s location, and thus the perfect CSI in ~\eqref{h_AW} is unavailable. In such cases, we propose to exploit Alice’s sensing capability to estimate Willie’s location, based on which an imperfect CSI can be obtained, with the CSI uncertainty characterized by the sensing accuracy. Accordingly, Alice is supposed to fulfill three key functions: localizing the warden through sensing, achieving covert transmission to Bob, and supporting reliable communication services for the public user, Carol.

In the following, we first derive the expression of the sensing-based imperfect channel vector, denoted by $\mathbf{\overline{h}}_\mathrm{AW}$, based on which the MDEP at Willie is analyzed. Subsequently, a robust covert rate maximization problem is formulated, subject to the QoS constraint for Carol, the CRB-based sensing accuracy constraint, and the covertness constraint for Bob. Finally, the proposed problem is solved with the aid of the Schur complement and the S-procedure.

\subsection{Sensing-based Imperfect CSI of Willie}

\setcounter{equation}{33}
Recall that the channel from Alice to Willie is modeled as
$\mathbf{h}_{\mathrm{AW}} = \sqrt{\frac{\beta_\mathrm{W}}{d_{\mathrm{AW}}^{\alpha_\mathrm{W}}}}\mathbf{a}_\mathrm{A}(\theta_\mathrm{W})$, where the large-scale fading coefficient $\beta_\mathrm{W}$ and the path-loss exponent $\alpha_\mathrm{W}$ can be obtained in advance through field measurements or radio maps, as suggested in~\cite{10592837,9933849}. 
Moreover, the distance $d_{\mathrm{AW}}$ has a negligible effect on the covertness metric due to Willie’s energy-ratio-based detection rule, which will be elaborated in the next subsection.
Consequently, the channel vector can be treated as a function of the angle of Willie, and the problem of acquiring the CSI can thus be transformed into an angle estimation problem for Willie.

We proposes leveraging Alice's radar sensing capability to obtain an estimate of Willie's angle, denoted by $\hat{\theta}_\mathrm{W}$. Specifically, prior to communication, Alice transmits an omnidirectional waveform and receives the echoes reflected by both the RIS and Willie. As the RIS is a cooperative node with a known location, an estimate of Willie's angle can be obtained by eliminating the echo signal from the RIS and applying the combined Capon and approximate maximum likelihood (CAML) technique~\cite{10227884,4655353}. After acquiring $\hat{\theta}_\mathrm{W}$, the system performs sensing and communication simultaneously, i.e., the signal defined in~\eqref{signal} is used both for communication with Bob and Carol and for sensing Willie.
Without affecting the analysis or optimization, we assume the radar cross section (RCS) to be~1 for simplicity~\cite{10364735}. Then, the echo signal reflected by Willie and received at Alice in the $l$-th time-domain snapshot is given as
\begin{equation}\label{echo}
	\mathbf{y}_\mathrm{A}[l] = {\frac{\beta_\mathrm{B(C)}}{d_{\mathrm{AW}}^{\alpha_\mathrm{W}}}} \mathbf{b}_\mathrm{A}(\hat{\theta}_\mathrm{W}) \mathbf{a}_\mathrm{A}(\hat{\theta}_\mathrm{W})^\mathrm{H}\mathbf{x}[l]  + \mathbf{n}_\mathrm{A}[l] ,
\end{equation}  
where $\mathbf{b}_\mathrm{A}(\hat{\theta}_\mathrm{W}) \in \mathbb{C}^{N_r \times 1}$ denotes the receiving steering vector at Alice, and $\mathbf{n}_\mathrm{A}[l] \sim \mathcal{CN}(0,\sigma_\mathrm{A}^2)$ is the noise at Alice. 
For convenience, we define $\alpha = {\frac{\beta_\mathrm{B(C)}}{d_{\mathrm{AW}}^{\alpha_\mathrm{W}}}}$
and $\mathbf{A}(\hat{\theta}_\mathrm{W})= \mathbf{b}_\mathrm{A}(\hat{\theta}_\mathrm{W}) \mathbf{a}_\mathrm{A}(\hat{\theta}_\mathrm{W})^\mathrm{H}$, and then \eqref{echo} can be expressed as $	\mathbf{y}_\mathrm{A}[l]= \alpha\mathbf{A}(\hat{\theta}_\mathrm{W})\mathbf{x}[l]+ \mathbf{n}_\mathrm{A}[l]$.

The accuracy of the estimated angle is quantified in terms of the CRB~\cite{9652071}, which is given in \eqref{CRB}, with $\dot{\mathbf{A}}(\hat{\theta}_\mathrm{W}) = \frac{\partial{\mathbf{A}}(\hat{\theta}_\mathrm{W})}{\partial\hat{\theta}_\mathrm{W}}$ and $\mathbf{W} = \mathbf{W}_\mathrm{B} + \mathbf{W}_\mathrm{C}$ denoting the covariance matrix of the transmitted signal.
Assume that the estimation error, denoted by $\Delta\theta_\mathrm{W}$, follows a Gaussian distribution with zero mean and a variance of $\mathrm{CRB}(\hat{\theta}_\mathrm{W})$, i.e., $\Delta\theta_\mathrm{W} \sim \mathcal{CN}(0, \mathrm{CRB}(\hat{\theta}_\mathrm{W}))$. 
Therefore, the real value of Willie's angle lies within the interval  
$ \Theta = \left[ \hat{\theta}_\mathrm{W} - 3\sqrt{\mathrm{CRB}(\hat{\theta}_\mathrm{W})},\ \hat{\theta}_\mathrm{W} + 3\sqrt{\mathrm{CRB}(\hat{\theta}_\mathrm{W})} \right]
$ with a probability of approximately 0.9973.

\setcounter{equation}{35}
So far, an imperfect channel of Willie can be obtained based on the estimated angle $\hat{\theta}_\mathrm{W}$, which is expressed as 
\begin{equation}\label{h_im}
\overline{\mathbf{h}}_{\mathrm{AW}}({\theta}_\mathrm{W} ) =\sqrt{\frac{\beta_\mathrm{W}}{d_\mathrm{AW}^{\alpha_\mathrm{W}}}} \mathbf{a}_A({\theta}_\mathrm{W} ), 	\theta_\mathrm{W} \in \Theta.
\end{equation}
To specifically characterize the uncertainty of CSI caused by the estimation error, we express the steering vector $\mathbf{a}_\mathrm{A}$ as
\begin{equation}\label{a_im}
	\mathbf{a}_\mathrm{A}({\theta}_\mathrm{W}) = \mathbf{a}_\mathrm{A}(\hat{\theta}_\mathrm{W} + \Delta\theta_\mathrm{W}) = \mathbf{a}_\mathrm{A}(\hat{\theta}_\mathrm{W}) +  \Delta\mathbf{a}_\mathrm{A}(\hat{\theta}_\mathrm{W}, \Delta\theta_\mathrm{W}),	
\end{equation}
where $\Delta\mathbf{a}_\mathrm{A}(\hat{\theta}_\mathrm{W},\Delta\theta_\mathrm{W})$ is exactly the uncertainty term, and its squared Euclidean norm can be expressed as \eqref{delta_a}, with the aid of Euler's theorem and trigonometric transformations.
To make equation \eqref{delta_a} easier to handle, we take the following approximation assuming a sufficiently small $\Delta\theta_\mathrm{W}$, that is,   
\setcounter{equation}{38}
\begin{equation}
	\label{sin}
	\sin\big(\hat{\theta}_\mathrm{W}+\Delta\theta_W\big) \approx \sin(\hat{\theta}_\mathrm{W})+\cos(\hat{\theta}_\mathrm{W})\Delta\theta_\mathrm{W}.
\end{equation}
Substituting \eqref{sin} into \eqref{delta_a} and using $\cos(x) \approx 1- \frac{x^2}{2}$, it can be derived that
{\small \begin{equation}\label{appro}
	\| \Delta \mathbf{a}_\mathrm{A}(\hat{\theta}_\mathrm{W},\Delta\theta_\mathrm{W})\|^2 \approx 
	\sum_{n=1}^{N_t} \Big(\frac{2\pi(n-1)d}{\lambda}\cos(\hat{\theta}_\mathrm{W})\Delta\theta_\mathrm{W})\Big)^2.
\end{equation}}
Since $|\Delta\theta_\mathrm{W}| \leq 3\sqrt{\mathrm{CRB}(\hat{\theta}_\mathrm{W})}$, we have
\begin{footnotesize}
\begin{equation}\label{bound}
	\begin{aligned}
		\| \Delta \mathbf{a}_\mathrm{A}(\hat{\theta}_\mathrm{W},\Delta\theta_\mathrm{W})\|^2  
		&\leq \sum_{n=1}^{N_t} \Big(\frac{2\pi(n-1)d}{\lambda}\cos(\hat{\theta}_\mathrm{W})3\sqrt{\mathrm{CRB}(\hat{\theta}_\mathrm{W})}\Big)^2 \\ &\triangleq K \cdot \mathrm{CRB}(\hat{\theta}_\mathrm{W}),	
	\end{aligned}
\end{equation}
\end{footnotesize}where $ K \triangleq  \left( \frac{6\pi d}{\lambda} \cos(\hat{\theta}_\mathrm{W}) \right)^2 \frac{(N_t - 1) N_t (2N_t - 1)}{6} $.
The sensing based imperfect channel of Willie can be finally given as
\begin{equation}\label{imperfect_h}
	\overline{\mathbf{h}}_{\mathrm{AW}}({\theta}_\mathrm{W}) =\sqrt{\frac{\beta_\mathrm{W}}{d_{\mathrm{AW}}^{\alpha_\mathrm{W}}}} \big( \mathbf{a}_A(\hat{\theta}_\mathrm{W} ) +  \Delta\mathbf{a}_A(\hat{\theta}_\mathrm{W},\Delta\theta_\mathrm{W})\big), 
\end{equation}
with the uncertainty term bounded by \eqref{bound}.

\subsection{Covertness Analysis and Problem Formulation}
Based on the MDEP detection and the covert communication requirement given in Sections~\ref{C} and~\ref{D}, the covertness constraint under the proposed sensing-based imperfect CSI is expressed as

\begin{equation}
	\label{covert}
	\frac{\overline{\lambda}_1(\theta_\mathrm{W})}{\overline{\lambda}_0(\theta_\mathrm{W})} \le x_2(\epsilon), \forall \theta_\mathrm{W} \in \Theta,
\end{equation}
where
\begin{equation}\label{lamba}
	\begin{cases}
	\overline{\lambda}_0(\theta_\mathrm{W})= \alpha \operatorname{Tr}(\mathbf{a}_\mathrm{A}(\theta_\mathrm{W})\mathbf{a}_\mathrm{A}(\theta_\mathrm{W})^\mathrm{H}\mathbf{W}_\mathrm{C})+\sigma_\mathrm{W}^2, \\
	\overline{\lambda}_1(\theta_\mathrm{W})= \alpha \operatorname{Tr}(\mathbf{a}_\mathrm{A}(\theta_\mathrm{W})\mathbf{a}_\mathrm{A}(\theta_\mathrm{W})^\mathrm{H}\mathbf{W})+\sigma_\mathrm{W}^2.
	\end{cases}
\end{equation}
It is observed that the large-scale fading term $\alpha={\frac{\beta_\mathrm{W}}{d_{\mathrm{AW}}^{\alpha_\mathrm{W}}}}$ appears in both the numerator and the denominator of \eqref{covert}, and can thus be neglected when the noise power is sufficiently low. This explains why the estimation for Willie’s distance, $d_\mathrm{AW}$, is not considered in this paper. 

Substituting \eqref{a_im} into \eqref{lamba}, we further have
\begin{equation}\label{lambda1}
	\begin{cases}
		\overline{\lambda}_0(\theta_\mathrm{W}) = 
		\alpha 
		\mathrm{Q}(\hat{\theta}_\mathrm{W}, \mathbf{W}_\mathrm{C}) +\sigma_\mathrm{W}^2,\\
		\overline{\lambda}_1(\theta_\mathrm{W}) =		
		\alpha
		\mathrm{Q}(\hat{\theta}_\mathrm{W}, \mathbf{W})
		+\sigma_\mathrm{W}^2,
	\end{cases}	
\end{equation}
where $\mathrm{Q}(\hat{\theta}_\mathrm{W},\mathbf{X})$ is given as \eqref{Q} with $\mathbf{X} \in \mathbb{C}^{N_t}$ being $\mathbf{W}$ or $\mathbf{W}_\mathrm{C}$.  
\setcounter{equation}{46}
As a result, the covertness constraint \eqref{covert} can be rewritten as 
{\footnotesize \begin{equation}\label{c_im}
	\frac{\alpha\mathrm{Q}(\hat{\theta}_\mathrm{W},\mathbf{W})+\sigma_\mathrm{W}^2}{\alpha\mathrm{Q}(\hat{\theta}_\mathrm{W},\mathbf{W}_\mathrm{C})+\sigma_\mathrm{C}^2}  \leq  x_2(\epsilon), \|\
    \Delta\mathbf{a}_\mathrm{A}(\hat{\theta}_\mathrm{W}, \Delta\theta_\mathrm{W})\|^2 \leq  K \cdot \mathrm{CRB}(\hat{\theta}_\mathrm{W}).	
\end{equation}}

\begin{figure*}[b] 
	\rule{\textwidth}{0.4pt} 
	\setcounter{equation}{45}
	\begin{equation}\label{Q}
		\mathrm{Q}(\hat{\theta}_\mathrm{W},\mathbf{X})  =
		\Delta{\mathbf{a}}_\mathrm{A}(\hat{\theta}_\mathrm{W}, \Delta\theta_\mathrm{W})^\mathrm{H}\mathbf{X}\Delta{\mathbf{a}}_\mathrm{A}(\hat{\theta}_\mathrm{W}, \Delta\theta_\mathrm{W}) + 	2\operatorname{Re}({\mathbf{a}}_\mathrm{A}(\hat{\theta}_\mathrm{W})^\mathrm{H}\mathbf{X}\Delta{\mathbf{a}}_\mathrm{A}(\hat{\theta}_\mathrm{W}, \Delta\theta_\mathrm{W})) +{{\mathbf{a}}_\mathrm{A}(\hat{\theta}_\mathrm{W})^\mathrm{H}\mathbf{X}{\mathbf{a}}_\mathrm{A}(\hat{\theta}_\mathrm{W})}
	\end{equation}
\end{figure*}

\setcounter{equation}{47}
Our objective is to maximize the communication rate for Bob, subject to a robust covertness constraint formulated under the proposed  sensing-based imperfect CSI model. Meanwhile, the QoS requirement for Carol is also considered. This  optimization problem serves as an extension of the known-location case discussed in Section~\ref{perfect}, and, based on problem $\mathcal{P}1.3$, it can be formulated as
\begin{subequations}\label{P2}
	\begin{align}
		(\mathcal{P}2)	~~&\mathop {{\rm{Max}}}\limits_{\mathbf{W}_\mathrm{B},\mathbf{W}_\mathrm{C},\mathbf{V}} 
		\overline{\operatorname{Tr}(\mathbf{V}\mathbf{G}_\mathrm{B}\mathbf{W}_\mathrm{B}\mathbf{G}_\mathrm{B}^\mathrm{H})} - \eta \big(\big\|\mathbf{V}\big\|_* - \overline{\mathbf{V}}^{(n)} \big),\label{3a} \\
		~~\text{s.t.} ~~~~
		& \mathrm{CRB}(\hat{\theta}_\mathrm{W}) \leq \Gamma_\mathrm{CRB}, \label{3b}\\
		& \eqref{c1}, \eqref{c2}, \eqref{2d},\eqref{2e},\eqref{2g},\eqref{2h}, \eqref{c_im}
	\end{align}
\end{subequations}
where \eqref{3b} imposes a constraint on Alice's sensing accuracy, with $\Gamma_\mathrm{CRB}$  representing the maximum tolerable CRB.
\begin{remark}\label{remark1} 
\emph{
The proposed sensing-based imperfect CSI model, shown as \eqref{imperfect_h}, shares the same mathematical form with conventional norm-bounded imperfect CSI models~\cite{10546987,10663964,10497104}, i.e., $\mathbf{h} = \hat{\mathbf{h}} + \Delta\mathbf{h}$, where~$|\Delta \mathbf{h}|^2 \leq \delta$, and~$\delta$ represents the CSI error bound.
The key difference lies in how~$\delta$ is determined. Unlike being a given constant in traditional norm-bounded imperfect CSI models,  $\delta$ is formulated as a function of the sensing performance, specifically characterized by CRB in our work.
The proposed CRB-bounded imperfect CSI model of Willie introduces a subtle connection between sensing performance and communication covertness, as given in \eqref{c_im}.}
\emph{
This feature makes a dual yet contradictory effect of sensing on communication.
Specifically, from a robustness perspective, improving the sensing accuracy (i.e., reducing the CRB) leads to reduced channel uncertainty, which in turn relaxes the covertness constraint \eqref{c_im} and facilitates improved covert communication performance.
However, it should be noted that there is an inherent competition for the beamforming resource at Alice between sensing and communication. Excessively pursuing a lower CRB would come at the expense of the beam gain directed toward the RIS, thereby reducing the achievable covert rate.
Therefore, the relationship between covert communication and sensing is not simply monotonic. An optimal beamforming design is expected to achieve a reasonable CRB to balance such dual and conflicting effects. }
\end{remark}

\subsection{Proposed Solutions}
The difficulty in solving problem $\mathcal{P}2$ mainly lies in the non-convex constraints \eqref{3b} and \eqref{c_im}, which will be addressed  using the Schur complement and the S-procedure, respectively.

Firstly, to tackle the constraint~\eqref{3b}, the following theorem is established according to the Schur complement condition~\cite{Brezinski2005}.
\begin{theorem} \label{theorem3} 
	\emph{Consider a Hermitian matrix}
	\[
	\mathbf{M} = 
	\begin{bmatrix}
		\mathbf{A} & \mathbf{B} \\
		\mathbf{B}^\mathrm{H} & \mathbf{C}
	\end{bmatrix},
	\]
\emph{where $\mathbf{A}$, $\mathbf{B}$, and $\mathbf{C}$ are matrices of appropriate dimensions. If $\mathbf{C}$ is invertible, the following two propositions are equivalent:}
	\begin{enumerate}
		\item $\mathbf{M} \succeq 0$;
		\item $\mathbf{C} \succeq 0$ \emph{and} $\mathbf{A}-\mathbf{B} \mathbf{C}^{-1} \mathbf{B}^\mathrm{H}  \succeq  0$.
	\end{enumerate}
\end{theorem}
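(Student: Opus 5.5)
The plan is to reduce both implications to a single congruence transformation that block-diagonalizes $\mathbf{M}$, and then use that congruence by an invertible matrix preserves positive semidefiniteness. Since $\mathbf{C}$ is invertible and Hermitian (being a diagonal block of the Hermitian $\mathbf{M}$), $\mathbf{C}^{-1}$ is also Hermitian. We can therefore define the Schur complement $\mathbf{S} \triangleq \mathbf{A}-\mathbf{B}\mathbf{C}^{-1}\mathbf{B}^\mathrm{H}$, which is Hermitian as well.

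The key algebraic step is to introduce the block matrix
\[
\mathbf{P} = \begin{bmatrix} \mathbf{I} & \mathbf{0} \\ \mathbf{C}^{-1}\mathbf{B}^\mathrm{H} & \mathbf{I} \end{bmatrix}.
\]
It is block lower-triangular with identity diagonal blocks, hence invertible, and its conjugate transpose is
\[
\mathbf{P}^\mathrm{H} = \begin{bmatrix} \mathbf{I} & \mathbf{B}\mathbf{C}^{-1} \\ \mathbf{0} & \mathbf{I} \end{bmatrix}.
\]
A direct block multiplication then gives
\[
\mathbf{M} = \mathbf{P}^\mathrm{H}\begin{bmatrix} \mathbf{S} & \mathbf{0} \\ \mathbf{0} & \mathbf{C} \end{bmatrix}\mathbf{P}.
\]
To verify this, first compute $\operatorname{blkdiag}(\mathbf{S},\mathbf{C})\,\mathbf{P} = \begin{bmatrix}\mathbf{S} & \mathbf{0}\\ \mathbf{B}^\mathrm{H} & \mathbf{C}\end{bmatrix}$. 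Left-multiplying by $\mathbf{P}^\mathrm{H}$ restores the $(1,1)$ block to $\mathbf{S}+\mathbf{B}\mathbf{C}^{-1}\mathbf{B}^\mathrm{H}=\mathbf{A}$ and the $(1,2)$ block to $\mathbf{B}$.

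With this identity in hand, we have for any vector $\mathbf{z}$, writing $\mathbf{P}\mathbf{z}=[\mathbf{z}_1^\mathrm{T},\mathbf{z}_2^\mathrm{T}]^\mathrm{T}$,
\[
\mathbf{z}^\mathrm{H}\mathbf{M}\mathbf{z} = \mathbf{z}_1^\mathrm{H}\mathbf{S}\mathbf{z}_1 + \mathbf{z}_2^\mathrm{H}\mathbf{C}\mathbf{z}_2 .
\]
Because $\mathbf{P}$ is a bijection, $\mathbf{P}\mathbf{z}$ ranges over all vectors as $\mathbf{z}$ does. Hence $\mathbf{M}\succeq 0$ if and only if $\operatorname{blkdiag}(\mathbf{S},\mathbf{C})\succeq 0$, which holds if and only if $\mathbf{S}\succeq 0$ and $\mathbf{C}\succeq 0$. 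This yields both directions $1)\Leftrightarrow 2)$ simultaneously. For $1)\Rightarrow 2)$, taking $\mathbf{z}_1=\mathbf{0}$ or $\mathbf{z}_2=\mathbf{0}$ isolates each block. For $2)\Rightarrow 1)$, the right-hand side is a sum of two nonnegative terms.

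There is no serious obstacle in this argument. The only point needing care is that the invertibility hypothesis on $\mathbf{C}$ is used solely to make $\mathbf{P}$ and $\mathbf{S}$ well defined. Under $1)$ or $2)$ it is automatically upgraded to $\mathbf{C}\succ 0$, which is the form we will use later when casting \eqref{3b} as a linear matrix inequality.
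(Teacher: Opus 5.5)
Your proof is correct. The factorization $\mathbf{M}=\mathbf{P}^\mathrm{H}\operatorname{Diag}\text{-block}(\mathbf{S},\mathbf{C})\,\mathbf{P}$ holds, which is where the Hermitian symmetry $\mathbf{C}^{-\mathrm{H}}=\mathbf{C}^{-1}$ is needed, and congruence by the invertible $\mathbf{P}$ then gives both directions at once. The paper gives no proof of this statement and simply invokes the standard Schur complement condition by citation; your block-$\mathbf{L}\mathbf{D}\mathbf{L}^\mathrm{H}$ congruence is the usual textbook proof of that fact, so nothing is missing.
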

\setcounter{equation}{48}

Based on \textbf{Theorem~3}, since $\operatorname{Tr}({\mathbf{A}}^\mathrm{H}(\hat{\theta}_\mathrm{W}) {\mathbf{A}}(\hat{\theta}_\mathrm{W})\mathbf{W})$ is guaranteed to be non-negative and invertible,  the constraint~ \eqref{3b} can be equivalently transformed into
\begin{equation} \label{sensing}
	\mathbf{U} - \begin{bmatrix}
		t_0 & 0 \\
		0 & 0
	\end{bmatrix} \succeq 0, 	
\end{equation}
where
{\small \[ \mathbf{U} = \begin{bmatrix}
	\operatorname{Tr}(\dot{\mathbf{A}}^\mathrm{H}(\hat{\theta_\mathrm{W}}) \dot{\mathbf{A}}(\hat{\theta_\mathrm{W}})\mathbf{W}) 
	& 
	\operatorname{Tr}(\dot{\mathbf{A}}^\mathrm{H}(\hat{\theta_\mathrm{W}}){\mathbf{A}}(\hat{\theta_\mathrm{W}})\mathbf{W}) \\
	
	\operatorname{Tr}({\mathbf{A}}^\mathrm{H}(\hat{\theta_\mathrm{W}}) \dot{\mathbf{A}}(\hat{\theta_\mathrm{W}})\mathbf{W}) & 
	\operatorname{Tr}({\mathbf{A}}^\mathrm{H}(\hat{\theta_\mathrm{W}}) {\mathbf{A}}(\hat{\theta_\mathrm{W}})\mathbf{W})
\end{bmatrix},
\] }
and $t_0 = \frac{\sigma_\mathrm{A}^2}{2|\alpha|^2L\Gamma_\mathrm{CRB}}$. 
It is seen that following the above transformation, the original non-convex fractional constraint~\eqref{3b} is converted into the more tractable convex form~\eqref{sensing}.

Next, we present the following theorem to handle the uncertainty sets in \eqref{c_im}.
\begin{theorem}\label{Theorem4}  \emph{Consider two quadratic functions with respect to $\mathbf{x} \in \mathbb{C}^{N\times 1}$:
\begin{equation*}
	f_1(\mathbf{x}) = \mathbf{x}^\mathrm{H}\mathbf{A}\mathbf{x} + 2 \operatorname{Re}(\mathbf{b}^\mathrm{H}\mathbf{x}) + c, 
\end{equation*}
\begin{equation*}	
	f_2(\mathbf{x}) =\mathbf{x}^\mathrm{H}\mathbf{Q}\mathbf{x},
\end{equation*}
where $\mathbf{A},\mathbf{Q} \in \mathbb{C}^{N\times N}$ are Hermitian matrices,  $\mathbf{b} \in \mathbb{C}^{N\times 1}$, and $c$ is a scalar.
The condition 
\[	f_2(\mathbf{x}) \leq \delta \implies 	f_1(\mathbf{x}) \geq 0\]
holds for all $\mathbf{x}$, if there exists a scalar \(\zeta \geq 0\) such that the following linear matrix inequality (LMI) is satisfied:
\begin{equation}
	\begin{bmatrix}
		\mathbf{A}+\zeta\mathbf{Q} & \mathbf{b}\\
		\mathbf{b}^\mathrm{H}  & c - \zeta \delta
	\end{bmatrix} \succeq 0.	
\end{equation}
\begin{proof}
The proof is a straightforward application of the S-Procedure.
\end{proof}}
\end{theorem}

\begin{figure*}[b] 
	\rule{\textwidth}{0.4pt} 
	\setcounter{equation}{50}
	\begin{equation}\label{cc}
		\small
		\Delta{\mathbf{a}}_\mathrm{A}(\hat{\theta}_\mathrm{W}, \Delta\theta_\mathrm{W})^\mathrm{H}\mathbf{M}\Delta{\mathbf{a}}_\mathrm{A}(\hat{\theta}_\mathrm{W}, \Delta\theta_\mathrm{W}) +  2\operatorname{Re}\big({\mathbf{a}}_\mathrm{A}(\hat{\theta}_\mathrm{W})^\mathrm{H}\mathbf{M}\Delta{\mathbf{a}}_\mathrm{A}\Delta{\mathbf{a}}_\mathrm{A}(\hat{\theta}_\mathrm{W}, \Delta\theta_\mathrm{W})\big)  +{\mathbf{a}}_\mathrm{A}(\hat{\theta}_\mathrm{W})^\mathrm{H}\mathbf{M}{\mathbf{a}}_\mathrm{A}(\hat{\theta}_\mathrm{W}) + \frac{d_{\mathrm{AW}}^{\alpha_\mathrm{W}}\sigma_\mathrm{W}^2(x_2(\epsilon)-1)}{\beta_\mathrm{W}} \geq 0	
	\end{equation}
\end{figure*}

Given \eqref{Q}, the inequality $\frac{\alpha\mathrm{Q}(\hat{\theta}_\mathrm{W},\mathbf{W})+\sigma_\mathrm{W}^2}{\alpha\mathrm{Q}(\hat{\theta}_\mathrm{W},\mathbf{W}_\mathrm{C})+\sigma_\mathrm{W}^2}  \leq  x_2(\epsilon)$ can be rewritten as \eqref{cc}, where
\setcounter{equation}{51}
\begin{equation}
\overline{\mathbf{W}} = \big({(x_2(\epsilon)-1)\mathbf{W}_\mathrm{C} -\mathbf{W}_\mathrm{B}}\big).	
\end{equation} 
Then, based on \textbf{Theorem \ref{Theorem4}}, the covertness constraint~\eqref{c_im} is equivalent to finding a scalar $\zeta \geq 0$ such that the following LMI condition is satisfied,
\begin{equation}\label{co}
	\mathbf{J}  +
	\begin{bmatrix}
		\zeta\frac{\mathbf{I^{(N_t \times N_t)}}}{K \cdot \mathrm{CRB}(\hat{\theta}_\mathrm{W})}& \mathbf{0}\\
		{\mathbf{0}} &  - \zeta 
	\end{bmatrix} 
	\succeq 0,	
\end{equation}
where 
\begin{equation*}
	\small
	\mathbf{J} = \begin{bmatrix}
		\overline{\mathbf{W}}  & \overline{\mathbf{W}}{\mathbf{a}}_A(\hat{\theta}_\mathrm{W})\\
		{\mathbf{a}}_A(\hat{\theta}_\mathrm{W})^\mathrm{H}\overline{\mathbf{W}}  & {\mathbf{a}}_A(\hat{\theta}_\mathrm{W})^\mathrm{H}\overline{\mathbf{W}}{\mathbf{a}}_A(\hat{\theta}_\mathrm{W}) + \frac{d_{\mathrm{AW}}^{\alpha_\mathrm{W}}\sigma_\mathrm{W}^2(x_2(\epsilon)-1)}{\beta_\mathrm{W}}
	\end{bmatrix}.
\end{equation*}
As a result, the original infinite-dimensional constraint~\eqref{c_im} is transformed into the finite-dimensional constraint. However, the constraint \eqref{co} is still difficult to tackle due to the involved $\mathrm{CRB}(\hat{\theta}_\mathrm{W})$. To this end, we further introduce a auxiliary variable $t$ which satisfies $t \geq t_0$, and a tighter version of \eqref{sensing} is expressed as
\begin{equation}\label{sensing1}
	\mathbf{U} - \begin{bmatrix}
		t & 0 \\
		0 & 0
	\end{bmatrix} \succeq 0. 	
\end{equation}
Moreover, since there exists $\frac{\sigma_\mathrm{A}^2}{{2|\alpha|^2L}t} \leq \Gamma_\mathrm{CRB}$, the $\mathrm{CRB}(\hat{\theta}_\mathrm{W})$ can be equivalently replaced by $\frac{\sigma_\mathrm{A}^2}{{2|\alpha|^2L}t}$, and the constraint \eqref{co} is reformulated as  
\begin{equation}\label{co1}
	\mathbf{J}  +
	\begin{bmatrix}
		\zeta\frac{{{2|\alpha|^2L}t}}{K \cdot {\sigma_\mathrm{A}^2}}\mathbf{I^{(N_t \times N_t)}}& \mathbf{0}\\
		{\mathbf{0}} &  - \zeta 
	\end{bmatrix} 
	\succeq 0.	
\end{equation}
 
Finally, The optimization problem $\mathcal{P}2$ is transformed into
\begin{subequations}\label{P2.1}
	\begin{align}
		(\mathcal{P}2.1)~&\mathop {{\rm{Max}}}\limits_{\mathbf{W}_\mathrm{B},\mathbf{W}_\mathrm{C},\mathbf{V}, t, \zeta} 
		\overline{\operatorname{Tr}(\mathbf{V}\mathbf{G}_\mathrm{B}\mathbf{W}_\mathrm{B}\mathbf{G}_\mathrm{W}^\mathrm{H})} - \eta \big(\big\|\mathbf{V}\big\|_* - \overline{\mathbf{V}}^{(n)} \big),\label{4a} \\
		\text{s.t.} ~~~~
		& t \geq t_0, \label{4b} \\
		& \zeta \geq 0, \label{4c} \\
		& \eqref{c1}, \eqref{c2}, \eqref{2d},\eqref{2e},\eqref{2g}, \eqref{2h}, \eqref{sensing1}, \eqref{co1}. \label{4d}
	\end{align}
\end{subequations}
The only challenge in solving $\mathcal{P}2.1$ arises from the coupling of two auxiliary variables in constraint \eqref{co1}, for which we adopt an alternating optimization (AO) based approach. Specifically, the problem $\mathcal{P}2.1$ is decomposed into two subproblems with respect to $\big\{ {\mathbf{W}_\mathrm{B},\mathbf{W}_\mathrm{C},\mathbf{V}, t} \big\}$ and ${\zeta}$, respectively.
Given $\zeta^{(k)}$ at the $k$-th iteration, the first subproblem is given as
\begin{subequations}\label{P2.1a}
	\begin{align}
		(\mathcal{P}2.1a)~~&\mathop {{\rm{Max}}}\limits_{\mathbf{W}_\mathrm{B},\mathbf{W}_\mathrm{C},\mathbf{V}, t} 
		\overline{\operatorname{Tr}(\mathbf{V}\mathbf{G}_\mathrm{B}\mathbf{W}_\mathrm{B}\mathbf{G}_\mathrm{B}^\mathrm{H})} - \eta \big(\big\|\mathbf{V}\big\|_* - \overline{\mathbf{V}}^{(n)} \big),\label{4a} \\
		~~\text{s.t.} ~~~~
		&\eqref{4b}, \eqref{4d}.
	\end{align}
\end{subequations}
 Given $\big\{ {\mathbf{W}_\mathrm{B}^{(k)}, \mathbf{W}_\mathrm{C}^{(k)}, t^{(k)}} \big\}$, the second subproblem is formulated as
\begin{subequations}\label{P2.1b}
	\begin{align}
		(\mathcal{P}2.1b)~~\quad & \text{Find}~ \zeta, \label{4a} \\
		\text{s.t.} \quad & \eqref{co1},~\eqref{4c}.
	\end{align}
\end{subequations}
Since $\mathcal{P}2.1a$ is a concave SDP problem and $\mathcal{P}2.1b$ is a linear feasibility problem, both of them can be directly solved using the CVX toolbox.

\begin{algorithm}[tbp]
	\caption{Joint optimization algorithm for~$\mathcal{P}2$}
	\label{algorithm2}
	\begin{algorithmic}[1] 
		\STATE Initialize feasible points $\big\{\mathbf{V}^{(0)}, \mathbf{W}_\mathrm{B}^{(0)}, \mathbf{W}_\mathrm{C}^{(0)}\big\}$, and the auxiliary variable $\zeta^{(0)}$.
		\STATE Set the iteration index $k=0$ for the overall AO loop.
		\STATE \textbf{Repeat}
		\STATE \hspace{8pt} Initialize the penalty factor $\eta^{(0)}$.
		\STATE \hspace{8pt} Set $i=0$ for outer loop of the subproblem~\eqref{P2.1a}.
		\STATE \hspace{8pt} \textbf{Repeat}
		\STATE \hspace{20pt} Set $n=0$ for inner loop of the subproblem~\eqref{P2.1a}.
		\STATE \hspace{20pt} \textbf{Repeat}
		\STATE \hspace{32pt} Solve~\eqref{P2.1a} with $\big\{\mathbf{V}^{(n)}, \mathbf{W}_\mathrm{B}^{(n)}, \mathbf{W}_\mathrm{C}^{(n)}\big\}$, $\eta^{(i)}$, $\zeta^{(k)}$.
		\STATE \hspace{32pt} Update $\big\{\mathbf{V}^{(n+1)}, \mathbf{W}_\mathrm{B}^{(n+1)}, \mathbf{W}_\mathrm{C}^{(n+1)},t^{(n+1)}\big\}$.
		\STATE \hspace{32pt} Update $n=n+1$.
		\STATE \hspace{21pt} \textbf{Until} the predefined accuracy or maximum iteration number is reached.
		\STATE \hspace{21pt} Update  $\big\{\mathbf{V}^{(0)}, \mathbf{W}_\mathrm{B}^{(0)}, \mathbf{W}_\mathrm{C}^{(0)}\big\}$=$\big\{\mathbf{V}^{(n)}, \mathbf{W}_\mathrm{B}^{(n)}, \mathbf{W}_\mathrm{C}^{(n)}\big\}$.
		\STATE \hspace{21pt} Update $\eta^{(i+1)} = \omega \eta^{(i)}$, and $i=i+1$.
		\STATE \hspace{8pt} \textbf{Until} $ \big\|\mathbf{V}\big\|_*^{(i)} - \big\|\mathbf{V}\big\|_2^{(i)} \leq \mu$.
		\STATE \hspace{8pt} $\big\{\mathbf{V}_\mathrm{B}^{(k)},\mathbf{W}_\mathrm{B}^{(k)}, \mathbf{W}_\mathrm{C}^{(k)}, t^{(k)}\big\}$ = $\big\{\mathbf{V}_\mathrm{B}^{(0)},\mathbf{W}_\mathrm{B}^{(0)}, \mathbf{W}_\mathrm{C}^{(0)}, t^{(0)}\big\}$.
		\STATE \hspace{8pt} Solve~\eqref{P2.1b} with $\big\{\mathbf{W}_\mathrm{B}^{(k)}, \mathbf{W}_\mathrm{C}^{(k)}, t^{(k)}\big\}$.
		\STATE \hspace{8pt} Update $\zeta^{(k+1)}$, and $k=k+1$.
		\STATE \textbf{Until} the predefined accuracy or maximum iteration number is reached.
	\end{algorithmic}
\end{algorithm}

The overall algorithm for the proposed optimization problem is concluded as \textbf{Algorithm~\ref{algorithm2}}. Note that the proposed algorithm consists of three nested loops. In the outermost iteration, the two subproblems~\eqref{P2.1a} and~\eqref{P2.1b} are solved alternately. Similar to \textbf{Algorithm~\ref{algorithm1}}, the solution of subproblem~\eqref{P2.1a} involves two loops based on the penalty method. Since subproblem~\eqref{P2.1b} is a feasibility problem with respect to the auxiliary variable~$\zeta$, it does not affect the value of the objective function. Moreover, the objective function of subproblem~\eqref{P2.1a} increases monotonically and is upper-bounded during the iterations. Therefore, \textbf{Algorithm~\ref{algorithm2}} is guaranteed to converge. Ignoring the computational complexity of solving~\eqref{P2.1b}, the overall complexity of the AO-based algorithm is given as~$\mathcal{O}(I_A I_o I_i (N_t^{3.5} + M^{3.5}))$, where $I_A$ denotes the number of iterations required for AO convergence.

\section{Simulation Results}\label{simulation}
In this section, numerical results are provided to compare the system performance under perfect and imperfect CSI models at Willie, to demonstrate the advantages of NOMA over OMA, and to validate the dual effect of sensing discussed in \textbf{Remark~\ref{remark1}}. Unless otherwise specified, the simulation settings are given as \textbf{Table~\ref{table}}.

During the development of the sensing-based imperfect CSI model, \eqref{delta_a} is approximated by \eqref{appro} for tractability. The feasibility of this approximation is demonstrated in Fig.\ref{Fig2}. As observed, when the angle error $\Delta\theta_W$ is within $[-2^\circ,2^\circ]$, the difference between the values of \eqref{delta_a} and \eqref{appro} is very small and thus negligible. For the proposed optimization problem~\eqref{P2}, the sensing accuracy constraint in~\eqref{3a} ensures that the achieved angle error remains at a sufficiently low level. Therefore, the proposed sensing-based imperfect CSI model can be regarded as reasonably accurate.

In addition to the perfect CSI in the known-location case and the sensing-based imperfect CSI model in the unknown-location case, we further consider the general norm-bounded imperfect CSI model, where the uncertainty is limited by a predefined constant, as a benchmark. Moreover, to evaluate the advantage of NOMA, the conventional OMA transmission scheme is included for comparison.
\begin{itemize}
	\item \textbf{Norm-bounded imperfect CSI model:} In this model, the CSI uncertainty is bounded by a predefined constant. Note that in the sensing-based imperfect CSI model, the maximum tolerable CRB is set as $\Gamma_\mathrm{CRB} = 4\times 10^{-6}$, which corresponds to a maximum angle error of $3\sqrt{\Gamma_\mathrm{CRB}}$, i.e., $0.344^\circ$. Therefore, for fair comparison, the norm-bounded imperfect CSI is given as
	\begin{equation*}
			\overline{\mathbf{h}}_{\mathrm{AW}}({\theta}_\mathrm{W} ) =\sqrt{\frac{\beta_\mathrm{W}}{d_{\mathrm{AW}}^{\alpha_\mathrm{W}}}}  \mathbf{a}_\mathrm{A}(\hat{\theta}_\mathrm{W} +\Delta\theta_\mathrm{W}), |\Delta\theta_\mathrm{W}| \leq 0.344^\circ.
	\end{equation*}
    \item \textbf{OMA transmission:} In this scheme, Bob and Carol occupy orthogonal time resources and communicate with Alice in sequence. The corresponding SINRs are given by
    $\mathrm{SINR}_\mathrm{B}^{\mathrm{OMA}} = \frac{| \mathbf{g}_\mathrm{B}\mathbf{w}_\mathrm{B}|^2}{\sigma_\mathrm{B}^2}, $ and
    $\mathrm{SINR}_\mathrm{C}^{\mathrm{OMA}}  = \frac{| \mathbf{g}_\mathrm{C} \mathbf{w}_\mathrm{C}|^2}{ \sigma_\mathrm{C}^2}.$
    Accordingly, the achievable transmission rates for Bob and Carol are expressed as
    $R_\mathrm{B}^{\mathrm{OMA}} = \tau_\mathrm{B} \log_2(1+\mathrm{SINR}_\mathrm{B}^{\mathrm{OMA}}),$ and 
    $R_\mathrm{C}^{\mathrm{OMA}} = \tau_\mathrm{C} \log_2(1+\mathrm{SINR}_\mathrm{C}^{\mathrm{OMA}}),$ respectively,
    where $\tau_\mathrm{B}$ and $\tau_\mathrm{C}$ denote the time allocation coefficients for Bob and Carol, respectively, satisfying $\tau_\mathrm{B} + \tau_\mathrm{C} = 1$. 
    The variables $\{\tau_\mathrm{B},\tau_\mathrm{C}\}$ are optimized as a separate subproblem via linear programming. 
\end{itemize}

\begin{table}[tbp]
	\centering
	\small
	\caption{Simulation Parameters}
	\label{table}
	\begin{tabular}{ll}
		\toprule
		\textbf{Parameter}   & \textbf{Value} \\
		\midrule
		Carrier frequency  & $f_c = 5$~GHz   \\
		Wavelength         & $\lambda = 0.06$~m   \\
		Element spacing    & $ d = 0.03$~m\\
		Noise power        & $\sigma_B^2,\sigma_C^2,\sigma_A^2,\sigma_W^2 = -70$~dBm  \\
		Fading coefficient & {\footnotesize $\beta_R,\beta_B,\beta_C = -30$~dB, $\beta_W=-20$~dB}  \\
		Path loss exponent & $\alpha_R,\alpha_B,\alpha_C = 2.4$, $\alpha_W =2$      \\
		Rician factor      & $\ell = 10$ \\
		Communication distance    & {\footnotesize $d_\mathrm{AR} = 50$~m, $d_\mathrm{RB} = 5$~m, $d_\mathrm{RC} = 8$~m} \\
		Sensing distance    &$d_\mathrm{AW} = 5$~m \\
		AoD of Alice       & $\theta_R=3^\circ$, $\hat{\theta}_W= 0^\circ$ \\
		Elevation angle of RIS & $\gamma_B,\gamma_C=0^\circ$\\ 
		Azimuth angle of RIS   & $\phi_B = 60^\circ$, $\phi_C = 150^\circ$ \\
		Transmitted antennas   & $N_t$ = 10 \\ 
		Received Antennas      & $N_r$ = 12  \\
		Element number         & $M =5\times 5$ \\
		Channel use/ snapshot & $L =30$ \\
		Power budget           & $P$ = 30~dBm  \\
		QoS requirement        & $R_0 = 1$~bps/Hz \\
		Covertness level       & $\epsilon = 0.1$        \\
		Sensing accuracy       & $\Gamma_\mathrm{CRB} = 4\mathrm{e}^{-6}$     \\
		Initialized penalty factor        & $\eta = 1\mathrm{e}^{-4}$    \\
		Scaling factor         & $\omega = 10$ \\
		\bottomrule
	\end{tabular}
\end{table}

\begin{figure}[tbp]
	\centering
	\includegraphics[width=2.7in]{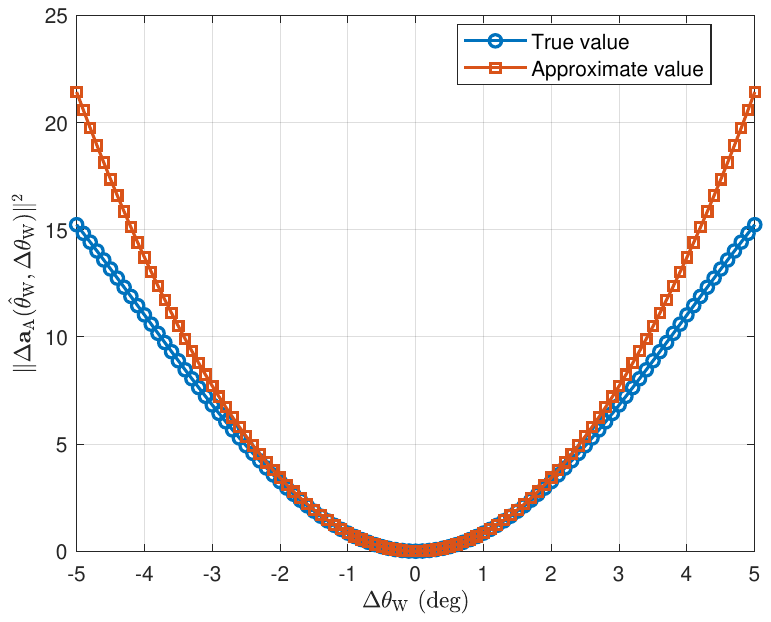}\\
	\caption{Accuracy of the sensing-based imperfect CSI model.}\label{Fig2}
\end{figure}

Combining the three channel models with two multiple access strategies results in six schemes\footnote{For brevity, perfect CSI, sensing-based imperfect CSI, and norm-bounded imperfect CSI are referred to as PC," SBIC," and ``NBIC," respectively, in the simulation figures.}, and we evaluate the performance of these schemes under different system settings in Fig.~\ref{Fig3} and Fig.~\ref{Fig4}. 

The trade-off between Carol's public communication and Bob's covert communication is investigated in Fig.~\ref{Fig3}. First, it is observed that Bob’s covert rate decreases as the QoS threshold $R_0$ of Carol increases under all six schemes. This degradation results from the competition for limited communication resources between Bob and Carol. Specifically, in the case of NOMA, the competing resources include the RIS beamforming and Alice’s power allocation, while in the case of OMA, the competition additionally involves the allocation of orthogonal time resources. A higher $R_0$ necessitates that more resources be allocated to Carol, which consequently reduces the resources available for Bob and ultimately leads to a lower covert rate.
Second, consistent with expectations, the perfect CSI model achieves the best performance under both NOMA and OMA schemes, serving as an upper performance bound for the other two channel models. The sensing-based imperfect CSI model significantly outperforms the norm-bounded imperfect CSI model due to the fact that the CRB-bounded CSI uncertainty in the former can be smaller than the fixed uncertainty assumed in the latter. This tighter uncertainty bound makes robust covert transmission more achievable. As $R_0$ increases, the performance gap among the three channel models gradually narrows. When $R_0 = 3$~bps/Hz, the covert rate achieved under the sensing-based imperfect CSI reaches up to 98.7\% and 93.7\% of that under the perfect CSI model for NOMA and OMA, respectively.
Furthermore, regardless of the adopted channel model, NOMA consistently outperforms OMA when $R_0 \neq 0$. As $R_0$ increases, the performance of OMA schemes degrades rapidly, whereas the decline under NOMA is significantly more gradual. This highlights the advantages of NOMA in enhancing user fairness and facilitating covert communication in multi-user systems.

 \begin{figure}[tbp]
	\centering
	\includegraphics[width=2.7in]{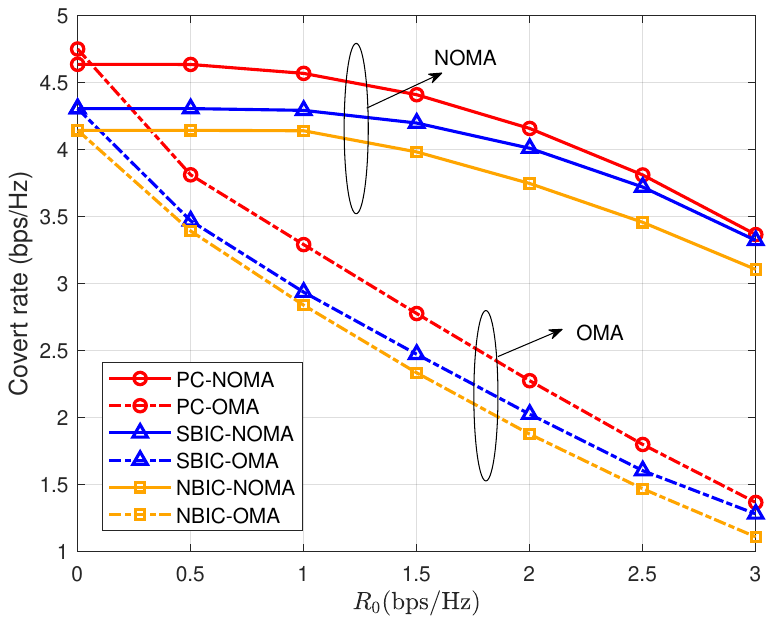}\\
	\caption{Bob-Carol rate trade-off.}\label{Fig3}
\end{figure}
\begin{figure}[tbp]
	\centering
	\includegraphics[width=2.7in]{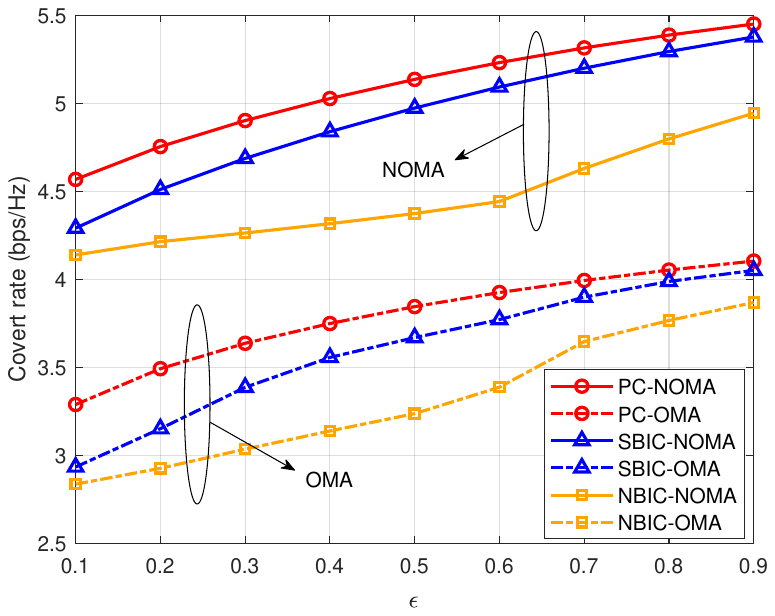}\\
	\caption{Covert rate versus the covertness requirement level.}\label{Fig4}
\end{figure}
In Fig.~\ref{Fig4}, the impact of the covertness requirement level,~$\epsilon$, on system performance is illustrated. It is observed that the covert rate increases as $\epsilon$ becomes larger. This trend can be interpreted in two ways. From the perspective of the covertness constraints, i.e., constraints \eqref{2j} and \eqref{c_im}, a larger~$\epsilon$ corresponds to a larger~$x_2(\epsilon)$, which relaxes the beamforming restrictions at Alice and allows more energy to be allocated to Bob, thereby improving the covert rate. Alternatively, from the viewpoint of the MDEP at Willie, a larger~$\epsilon$ means a lower detection requirement, which naturally enables a higher achievable covert rate.
Moreover, aligned with the findings in Fig.~\ref{Fig3}, the perfect CSI model achieves the highest covert rate, followed by the sensing-based and norm-bounded imperfect CSI models. NOMA consistently outperforms OMA, demonstrating notable performance gains.

\begin{figure}[tbp]
	\centering
	\includegraphics[width=2.8in]{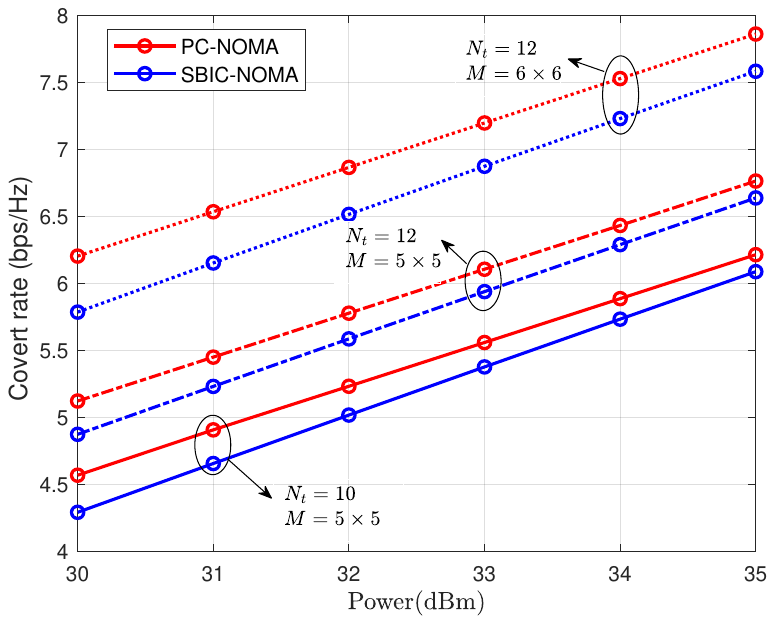}\\
	\caption{Covert rate versus power under different~$M$ and~$N_t$.}\label{Fig5}
\end{figure}

The impact of the power budget on covert communication performance is illustrated in Fig.~\ref{Fig5}, with different configurations of transmit antennas and RIS sizes considered.  
It is shown that the covert rate increases with the maximum transmit power $P$, which is easily expected, as the SINR increases proportionally with transmit power. 
Additionally, it is observed that increasing the number of transmit antennas at Alice improves covert performance. This is because Alice with more transmitted antennas, enables more precise beamforming under the same system settings.
Likewise, increasing the number of RIS element also leads to a higher covert rate, thanks to two factors: more propagation paths and enhanced RIS beamforming capability.

Next, we explore the relationship between sensing accuracy and robust covert communication rate in Fig.~\ref{Fig6}. 
This figure includes dual horizontal axes, which represent the upper bound of sensing accuracy in the sensing-based imperfect CSI model and the upper bound of angle error in the norm-bounded imperfect CSI model, respectively. These two axes correspond to each other according to $|\Delta \theta_\mathrm{W}|_{\max} = 3\sqrt{\Gamma_{\mathrm{CRB}}}$. For the perfect CSI model, the channel uncertainty is zero, and thus the covert transmission rate remains constant. For the norm-bounded imperfect CSI model, its CSI uncertainty increases monotonically with the angle error $|\Delta\theta_\mathrm{W}|$, resulting in a gradual degradation of the robust covert transmission rate. Next, we focus on the sensing-based imperfect CSI model, whose uncertainty is bounded by $K \times \mathrm{CRB}$.
It can be observed that when $\Gamma_\mathrm{CRB} < 3.4\mathrm{e}^{-6}$, both the CSI uncertainty and the covert communication rate increase as $\Gamma_\mathrm{CRB}$ increases. This indicates a trade-off between sensing and communication. Specifically, when $\Gamma_\mathrm{CRB}$ is relatively small, to meet the sensing accuracy constraint \eqref{3b}, more energy is allocated for sensing (i.e., the beams are more aligned toward Willie), which consequently reduces the communication performance.
However, it is noteworthy that when $\Gamma_\mathrm{CRB} \geq 3.4\mathrm{e}^{-6}$, both the CRB and covert transmission rate stop increasing and remain constant. This indicates that constraint \eqref{3b} no longer exerts any influence.
In this case, to achieve robust covert rate maximization under the constraint \eqref{c_im}, an optimal beamforming is designed to maintain the CRB at an appropriate value~(0.0827), as discussed in \textbf{Remark~\ref{remark1}}.
Comparing the norm-bounded imperfect CSI model and the sensing-based imperfect CSI model, it is found that when $\Gamma_{\mathrm{CRB}}$ or the angle error bound is relatively small, the former achieves a higher covert communication rate, owing to the inherent trade-off between sensing and communication. However, as $\Gamma_{\mathrm{CRB}}$ or the angle error increases, the latter gradually demonstrates its performance advantage. The results shown in Fig.6 indicate that, in practical implementations of the proposed joint sensing and covert communication system, the selection of an appropriate value of $\Gamma_{\mathrm{CRB}}$ plays a crucial role in fully exploiting the performance benefits. Under the parameters listed in \textbf{Table~\ref{table}}, it is required that $\Gamma_{\mathrm{CRB}} \geq 3.4 \times 10^{-6}$.
\begin{figure}[tbp]
	\centering
	\includegraphics[width=2.9in]{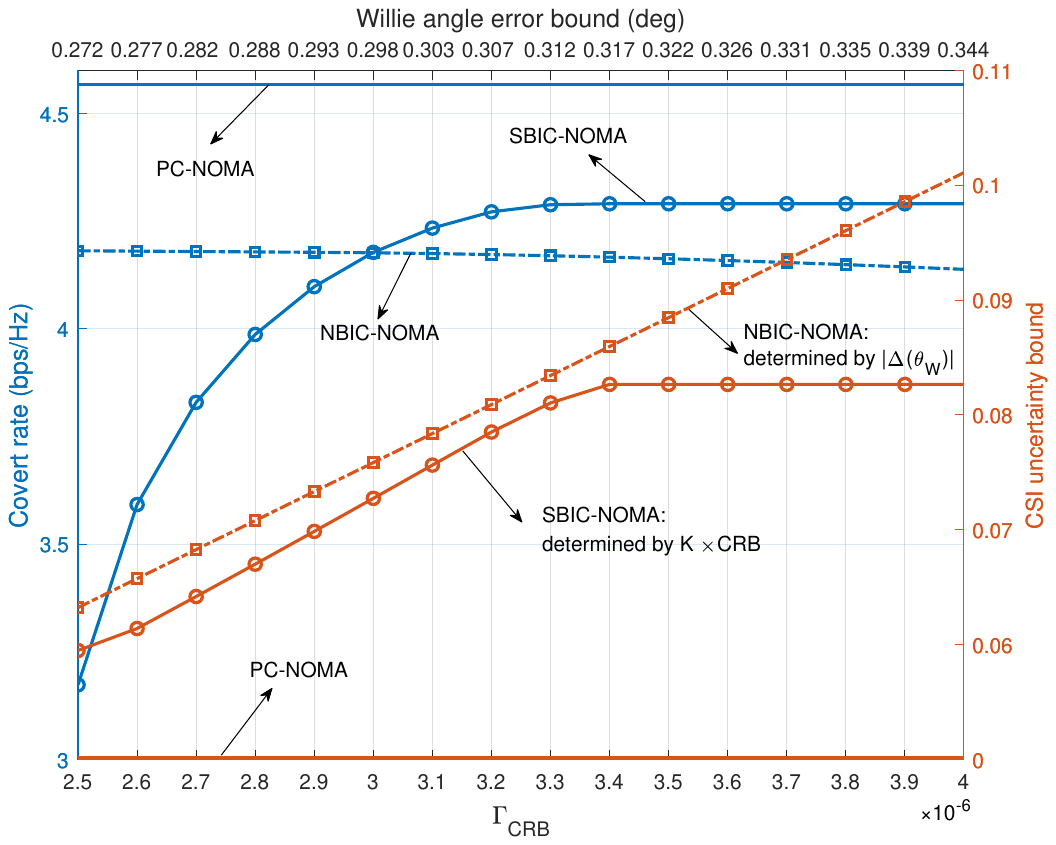}\\
	\caption{Sensing-communication relationship versus $\Gamma_\mathrm{CRB}$.}\label{Fig6}
\end{figure}
\begin{figure}[tbp]
	\centering
	\subfigure[$N_r = 12$]{
		\includegraphics[width=2.8in]{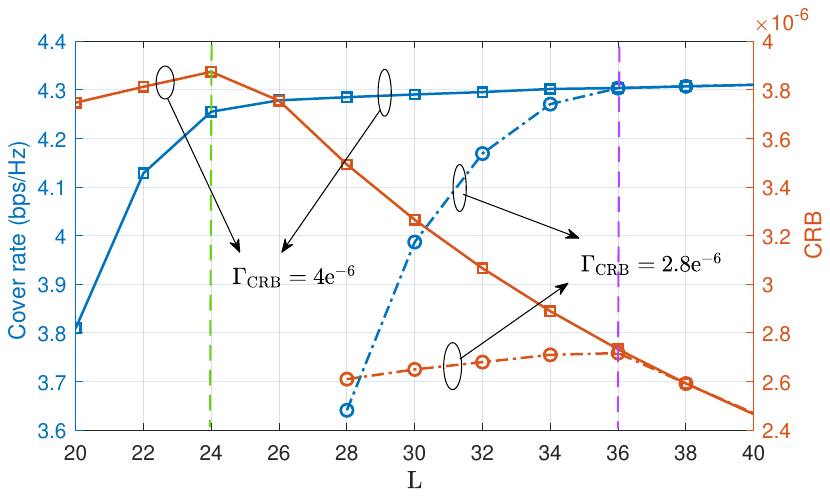}
		\label{a}}
	\subfigure[$N_r = 14$]{
		\includegraphics[width=2.8in]{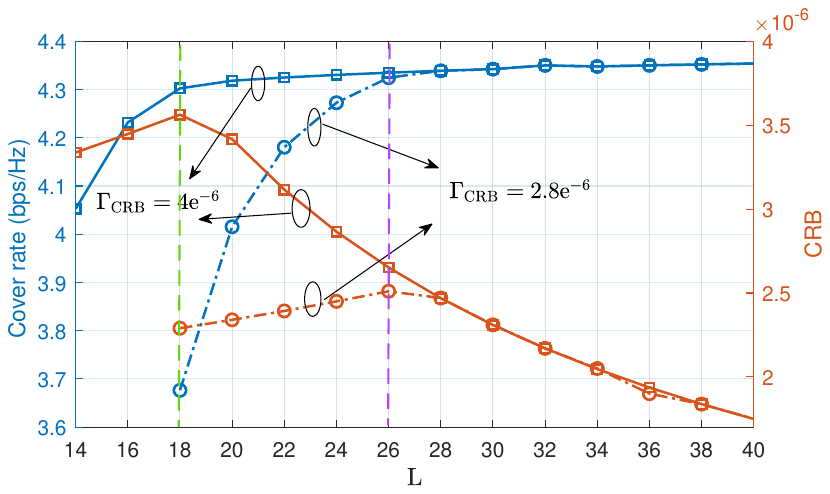}
		\label{b}}		
	\caption{Sensing-communication relationship versus $L$}
	\label{Fig7}
\end{figure}

Fig.~\ref{a} and Fig.~\ref{b} depicts the variation trends of CRB and covert rate with respect to the snapshot $L$,  under $N_r = 12$ and $N_r = 14$, respectively\footnote{The starting points of the horizontal axis for the curves with $\Gamma_\mathrm{CRB}=2.8\times10^{-6}$ and $\Gamma_\mathrm{CRB}=4\times10^{-6}$ are different. This is because, in Fig.\ref{a} and Fig.\ref{b}, the condition $\mathrm{CRB} \leq 2.8\times10^{-6}$ cannot be satisfied when $L<28$ and $L<18$, respectively, rendering the proposed optimization problem infeasible.}.
As observed in Fig.~\ref{a}, when $\Gamma_\mathrm{CRB} = 4\mathrm{e}^{-6}$, both the CRB and the covert rate increase with $L$ for $L < 24$. This is because a smaller number of snapshots corresponds to weaker sensing capability. Under such conditions, the trade-off between sensing accuracy and covert communication rate is the dominant factor.
However, when $L \geq 24$, the CRB decreases while the covert rate keeps increasing with $L$, indicating that enhanced sensing capability facilitates robust covert communication. This is because the constraint $\mathrm{CRB} \leq \Gamma_\mathrm{CRB}$ becomes inactive in this region. As shown in \eqref{CRB}, a larger $L$ leads to a smaller optimized CRB obtained by \textbf{Algorithm~\ref{algorithm2}}, resulting in reduced channel uncertainty and consequently, a higher achievable robust covert rate.
For $\Gamma_\mathrm{CRB} = 2.8\mathrm{e}^{-6}$, a similar trend emerges: sensing accuracy initially hinders the covert rate but later enhances it, with the transition point occurring at $L = 36$.
When $L$ is sufficiently large, the curves under different values of $\Gamma_\mathrm{CRB}$ coincide.

The observations in Fig.~\ref{b} are consistent with those in Fig.~\ref{a}. Under identical parameter settings, Fig.~\ref{b} yields a lower CRB and a higher covert rate compared to Fig.~\ref{a}. Moreover, the transition points appear earlier in Fig.~\ref{b}, with $L = 18$ and $L = 26$ for $\Gamma_\mathrm{CRB} = 4\mathrm{e}^{-6}$ and $\Gamma_\mathrm{CRB} = 2.8\mathrm{e} 10^{-6}$, respectively. This is attributed to the increased number of receive antennas $N_r$, which enhances Alice’s sensing capability.

Furthermore, we study the impact of the angular separation between the RIS and Willie in Fig.~\ref{Fig8}.
Given that~$\theta_\mathrm{W}$ is set as~$0^\circ$, $\theta_\mathrm{R}$ exactly denotes the angular separation.
It can be observed that under the perfect CSI model, the covert rate initially increases with the growth of $\theta_\mathrm{R}$, and then becomes stable with slight fluctuations. 
This is because, with a limited number of antennas at Alice, the spatial resolution of the transmit beam is restricted. A larger angular separation facilitates energy focusing at the RIS, thereby enhancing the communication performance.
For the sensing-based imperfect CSI model, the covert rate initially increases with $\theta_\mathrm{R}$, but then fluctuates and decreases significantly.
The underlying reason is that with this channel model, in addition to the RIS, the beam also needs to be directed toward Willie to obtain its CSI. 
When the angular separation is small, the RIS and Willie share one main beam lobe. However, as the angular separation increases, the beam has to be split into two lobes, resulting in less energy being allocated to the RIS.
Additionally, we observe that with small $\theta_\mathrm{R}$, the achievable performance of the sensing-based imperfect CSI model almost reaches that of the perfect CSI model, regardless of the value of $\Gamma_\mathrm{CRB}$. However, as $\theta_\mathrm{R}$ increases, the covert rate deteriorates with a lower $\Gamma_\mathrm{CRB}$. This is due to the trade-off between the two lobes, one used for communication and the other for sensing.

\begin{figure}[tbp]
	\centering
	\includegraphics[width=2.8in]{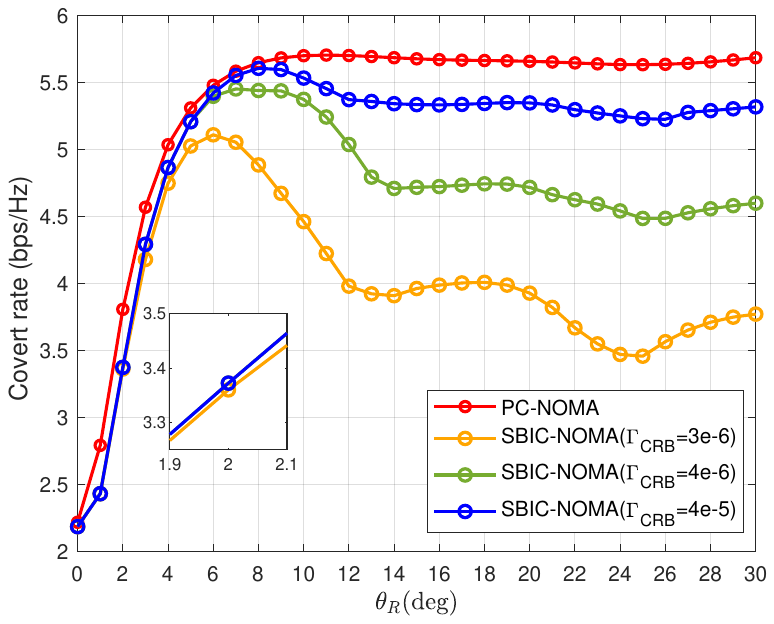}\\
	\caption{The performance under different $\theta_\mathrm{R}$.}\label{Fig8}
\end{figure}
\begin{figure}[tbp]
	\centering
	\includegraphics[width=2.7in]{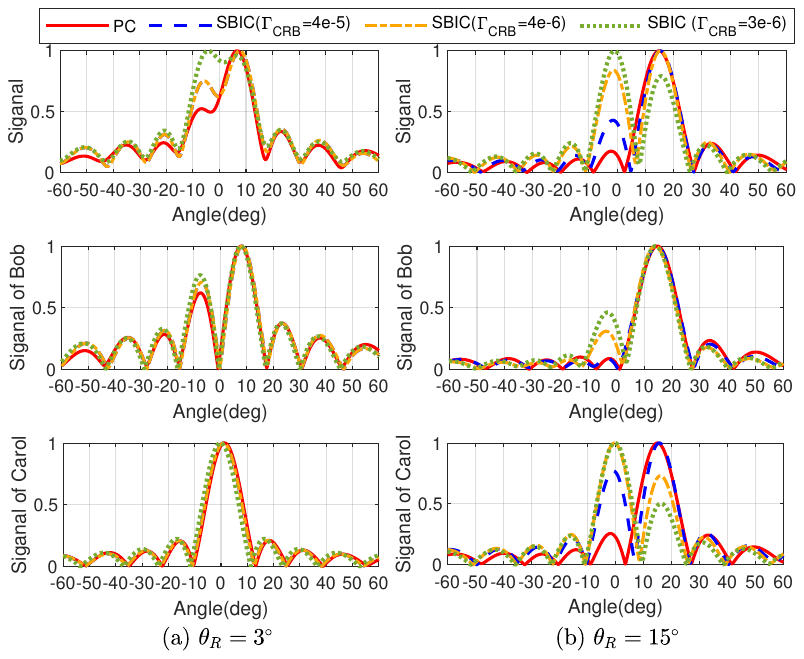}\\
	\caption{Normalized beampattern of Alice.}\label{Fig9}
\end{figure}

Fig.~\ref{Fig9} plots the normalized beampatterns of Alice for both cases of $\theta_\mathrm{R} = 3^\circ$ and $\theta_\mathrm{R} = 15^\circ$, where the superimposed signal, the signal of Bob, and the signal of Carol are shown separately.
It can be seen that when $\theta_\mathrm{R} = 3^\circ$, there exists a single main lobe, whereas for $\theta_\mathrm{R} = 15^\circ$, two lobes are formed. Under the sensing-based imperfect CSI model, a smaller value of $\Gamma_\mathrm{CRB}$ results in a stronger signal directed toward Willie and, correspondingly, a weaker signal received at the RIS. This observation supports the analysis provided for Fig.~\ref{Fig8}.
More importantly, from the beampattern of Bob's signal, it can be observed that its power received at Willie is suppressed to nearly zero, while it reaches the peak  at RIS. This beamforming strategy is favorable for facilitating covert transmission to Bob.
As for Carol's beampattern, the signal is steered toward both the RIS and Willie. In particular, when $\Gamma_\mathrm{CRB}$ is small, a larger portion is directed toward Willie to satisfy the sensing requirement.

\section{Conclusion}\label{conclusion}
This paper investigated the joint sensing and covert communications in a RIS-NOMA system involving a DFRC-BS, a covert user (Bob), a public user (Carol), and a warden (Willie).
For the case of known-location Willie, the active and passive beamforming were jointly optimized under the perfect CSI model, to maximize Bob’s covert rate while satisfying Carol’s QoS requirement.
For the case of unknown-location Willie, a sensing-based imperfect CSI model was proposed with the uncertainty bounded by the CRB, and a robust covert rate optimization problem was formulated accordingly.
Simulation results validated the advantages of the sensing-based imperfect CSI model and NOMA scheme, illustrated the impact of the RIS-Willie angular separation, and examined the implementation of sensing and communication via beampattern analysis.
Notably, it was found that when the sensing accuracy requirement is over-high , sensing tends to constrain communication performance. Otherwise, accurate sensing facilitates more robust covert transmission.

\bibliographystyle{IEEEtran}
\bibliography{mybib}

\begin{IEEEbiography}[{\includegraphics[width=1in, height=1.25in,clip, keepaspectratio]{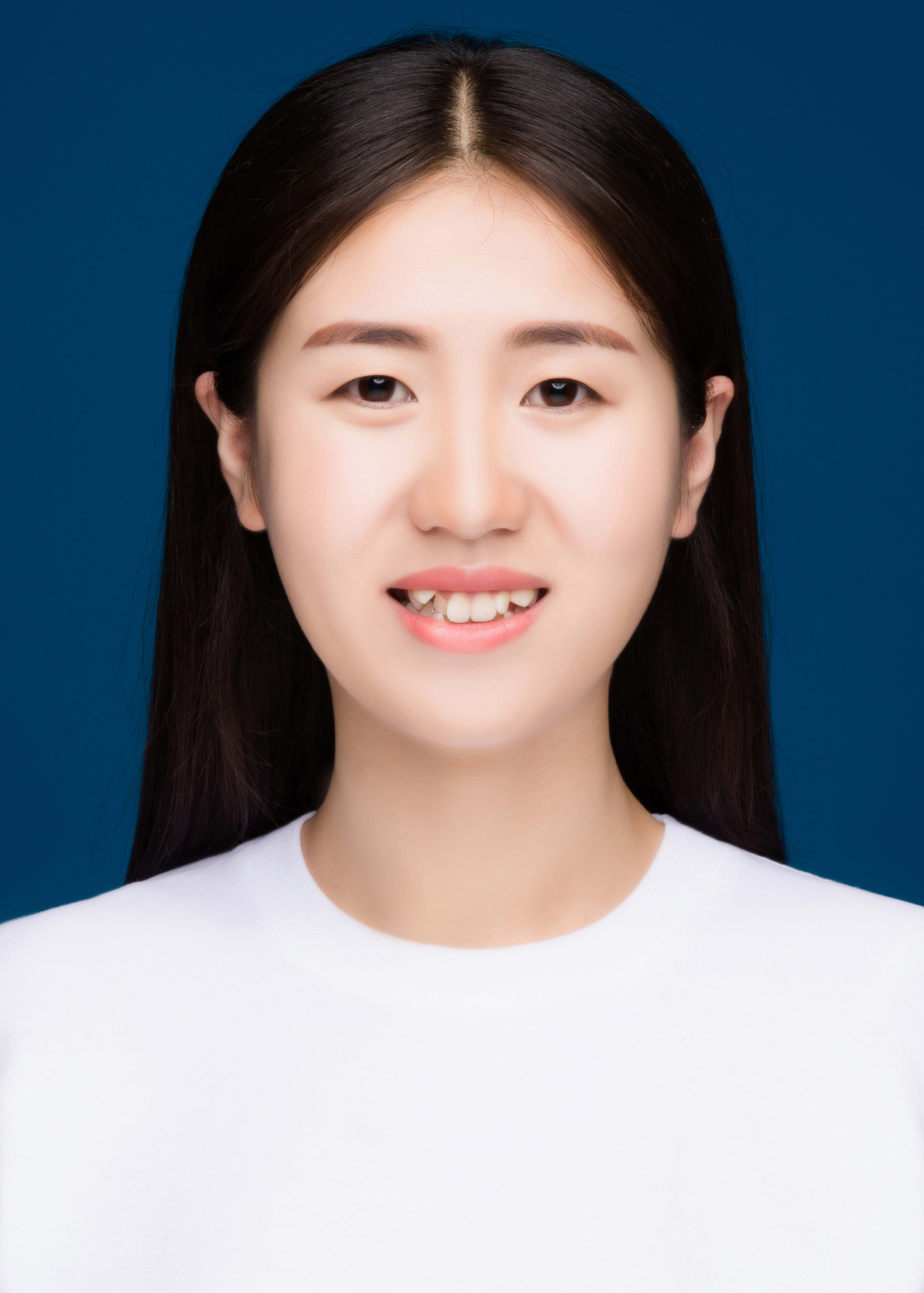}}]
   {Jiayi Lei} (Graduate Student Member, IEEE) received the Ph.D. degree in Information and Communication Engineering from Beijing University of Posts and Telecommunications (BUPT), Beijing, China, in 2025. From 2023 to 2024, she was a Visiting Student at Queen Mary University of London (QMUL), U.K. Her research interests include reconfigurable intelligent surfaces (RIS), non-orthogonal multiple access (NOMA), and integrated sensing and communications (ISAC).
\end{IEEEbiography}

\begin{IEEEbiography}[{\includegraphics[width=1in, height=1.25in,clip, keepaspectratio]{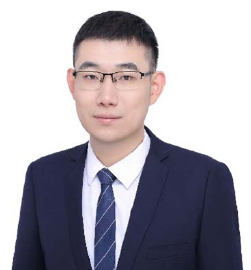}}]
	{Xidong Mu} (Member, IEEE, \url{https://xidongmu.github.io/}) received the Ph.D. degree in Information and Communication Engineering from the Beijing University of Posts and Telecommunications, Beijing, China, in 2022. He was with the School of Electronic Engineering and Computer Science, Queen Mary University of London, from 2022 to 2024, where he was a Postdoctoral Researcher. He has been a lecturer (an assistant professor) with the Centre for Wireless Innovation (CWI), School of Electronics, Electrical Engineering and Computer Science, Queen’s University Belfast, U.K. since August 2024. His research interests include flexible-antenna technologies, reconfigurable surface aided communications, next generation multiple access (NGMA), integrated sensing and communications, and optimization theory. 
	
	Xidong Mu is a Web of Science Highly Cited Researcher. He received the IEEE ComSoc Outstanding Young Researcher Award for EMEA region in 2023 and the IEEE ComSoc Wireless Communications Technical Committee (WTC) Outstanding Young Researcher Award in 2025. He is the recipient of the 2024 IEEE Communications Society Heinrich Hertz Award, the Best Paper Award in ISWCS 2022, the 2022 IEEE SPCC-TC Best Paper Award, and the Best Student Paper Award in IEEE VTC2022-Fall. He serves as the secretary of the IEEE ComSoc Technical Committee on Cognitive Networks (TCCN), the secretary of the IEEE ComSoc NGMA Emerging Technology Initiative, and the URSI UK Early Career Representative (ECR) for Commission C. He also serves as an Editor of \textsc{IEEE Transactions on Communications}, a Guest Editor for \textsc{IEEE Journal on Selected Areas in Communications}, \textsc{IEEE Transactions on Cognitive Communications and Networking}, \textsc{IEEE Internet of Things Journal}, and the “Mobile and Wireless Networks” symposium co-chair of IEEE GLOBECOM 2025. 
\end{IEEEbiography}

\begin{IEEEbiography}[{\includegraphics[width=1in, height=1.25in,clip, keepaspectratio]{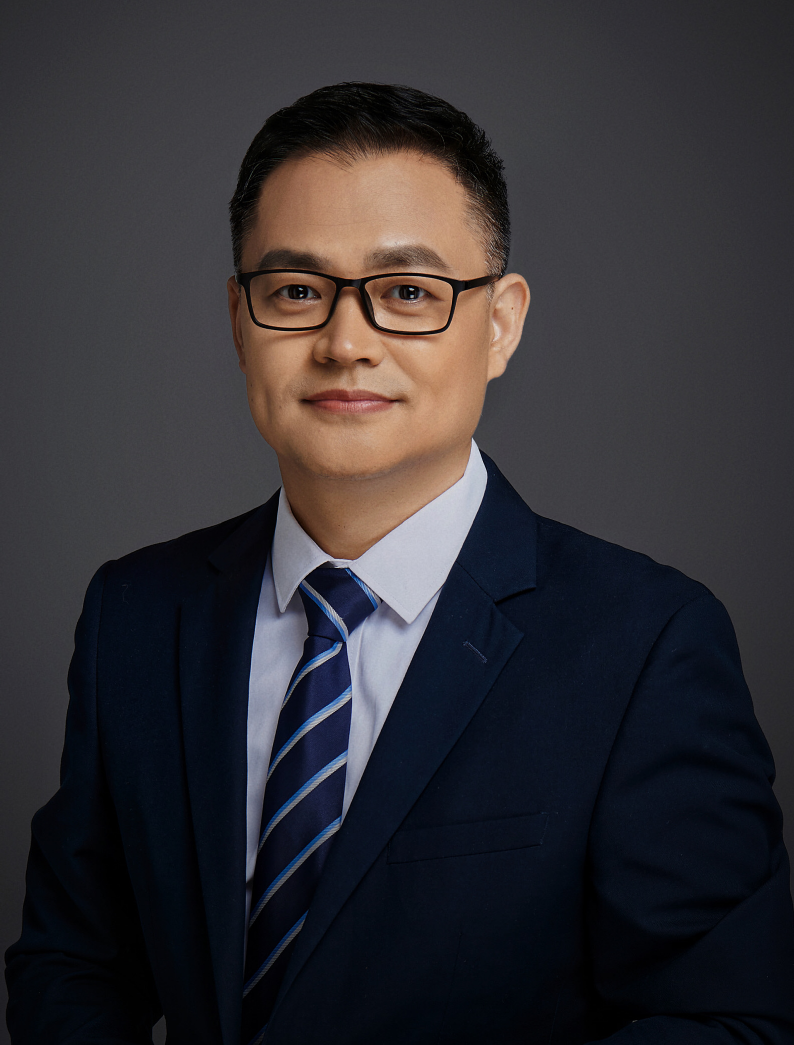}}]
	{Tiankui Zhang} (M'10-SM'15) received the Ph.D. degree in Information and Communication Engineering and B.S. degree in Communication Engineering from Beijing University of Posts and Telecommunications (BUPT), China, in 2008 and 2003, respectively. Currently, he is a Professor in School of Information and Communication Engineering at BUPT. His research interests include artificial intelligence enabling wireless networks, UAV communications in 5G and beyond networks, intelligent mobile edge computing, signal processing for wireless communications. He had published more than 240 papers including journal papers on IEEE Journal on Selected Areas in Communications, IEEE Transaction on Communications, etc., and conference papers, such as IEEE GLOBECOM and IEEE ICC. He is the co-recipient of the Best Paper Award in IEEE GLOBECOM 2022, the Best Paper Award in IEEE APCC 2011. He has served as a TPC Member for many IEEE conferences, such as GLOBECOM and PIMRC, the Technical Program Committee Chair for AiCON 2021. He has served as the guest editor for the special issue “Future Network Architecture and Key Technologies" of the Journal of Beijing University of Posts and Telecommunications, the special issue “Recent Advances in UAV Communications and Networks" of Sensors.
\end{IEEEbiography}

\begin{IEEEbiography}[{\includegraphics[width=1in, height=1.25in,clip, keepaspectratio]{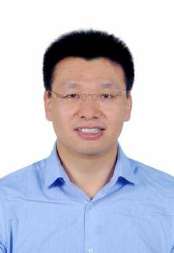}}]
	{Wenjun Xu} (Senior Member, IEEE) received the B.S. and Ph.D. degrees from BUPT, Beijing, China, in 2003 and 2008, respectively. He is currently a Professor and a Ph.D. Supervisor with the School of Artificial Intelligence, State Key Laboratory of Network and Switching Technology, Beijing University of Posts and Telecommunications, Beijing, China. His research interests include AI-driven networks, semantic communications, UAV communications and networks, and green communications and networking. He is serving as an Editor for China Communications.
\end{IEEEbiography}

\begin{IEEEbiography}[{\includegraphics[width=1in, height=1.25in,clip, keepaspectratio]{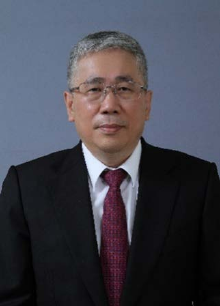}}]
	{Ping Zhang} (M’07-SM’15-F’18) received his Ph.D. degree from Beijing University of Posts and Telecommunications (BUPT) in 1990. He is currently a professor of School of Information and Communication Engineering at BUPT, the director of State Key Laboratory of Networking and Switching Technology, a member of IMT-2020 (5G) Experts Panel, and a member of Experts Panel for China’s 6G development. He served as Chief Scientist of National Basic Research Program (973 Program), an expert in Information Technology Division of National High-tech R\&D program (863 Program), and a member of Consultant Committee on International Cooperation of National Natural Science Foundation of China. He is an Academician of the Chinese Academy of Engineering (CAE). His research is in the board area of wireless communications.
\end{IEEEbiography}

\end{document}